\newenvironment{proofof}[1]{\begin{proof}[Proof of {#1}]}{\end{proof}}
\newenvironment{proofsketch}{\begin{proof}[Proof sketch]}{\end{proof}}
\newtheorem{theorem}{Theorem}[section]
\newtheorem{lemma}[theorem]{Lemma}
\newcommand{\np}{{\em NP}\xspace}
\newcommand{\npcomplete}{\np-complete\xspace}
\DeclareMathOperator{\supp}{supp}
\DeclareMathOperator{\argmax}{argmax}
\newcommand{\R}{\ensuremath{\mathbb R}}
\newcommand{\Z}{\ensuremath{\mathbb Z}}
\newcommand{\A}{\ensuremath{\mathcal{A}}}
\newcommand{\B}{\ensuremath{\mathcal{B}}}
\newcommand{\G}{\ensuremath{\mathcal{G}}}
\newcommand{\Hc}{\ensuremath{\mathcal{H}}}
\newcommand{\Lc}{\ensuremath{\mathcal L}}
\newcommand{\Sc}{\ensuremath{\mathcal S}}
\newcommand{\Pc}{\ensuremath{\mathcal P}}
\newcommand{\OPT}{\ensuremath{\mathit{OPT}}}
\newcommand{\sm}{\ensuremath{\setminus}}
\newcommand{\es}{\ensuremath{\emptyset}}
\newcommand{\poly}{\operatorname{\mathsf{poly}}}
\newcommand{\e}{\ensuremath{\epsilon}}
\newcommand{\sse}{\subseteq}
\newcommand{\tx}{\ensuremath{\tilde x}}
\newcommand{\hx}{\ensuremath{\hat x}}
\newcommand{\hy}{\ensuremath{\hat y}}
\newcommand{\hmu}{\ensuremath{\hat\mu}}
\newcommand{\ld}{\ensuremath{\lambda}}
\newcommand{\al}{\ensuremath{\alpha}}
\newcommand{\dt}{\ensuremath{\delta}}
\newcommand{\w}{\ensuremath{\omega}}
\newcommand{\ozalg}{\ensuremath{\A_{\text{OZ}}}}
\newcommand{\bnalg}{\ensuremath{\B_{\text{BN}}}}
\newcommand{\rndalg}{\ensuremath{\B}}
\newcommand{\absopt}{\ensuremath{\mathsf{opt}}}
\newcommand{\ldobj}{\G}
\newcommand{\absldobj}{\Hc}
\newcommand{\cst}{CCST\xspace}
\newcommand{\mcst}{M\cst}
\newcommand{\mbdst}{MBDST\xspace}
\newcommand{\fpra}{FPRA\xspace}
\newcommand{\pst}{\ensuremath{\text{P}_{\text{ST}}}}
\newcommand{\PM}{\ensuremath{\mathcal{P}(M)}}
\newcommand{\QPM}{\ensuremath{(Q^{\PM}})}
\title{Approximating Min-Cost Chain-Constrained Spanning Trees: A Reduction from Weighted
  to Unweighted Problems\footnote{A preliminary version~\cite{LinharesS16} will appear in
    the Proceedings of the 18th IPCO, 2016.}}
\author{Andr\'e Linhares\thanks{{\tt \{alinhare,cswamy\}@uwaterloo.ca}.
        Dept. of Combinatorics and Optimization, University of Waterloo, Waterloo, ON N2L 3G1.
        Research supported partially by the second author's NSERC grant 327620-09, NSERC
        Discovery Accelerator Supplement Award, and Ontario Early Researcher Award.} 
\and 
        Chaitanya Swamy$^{\text{\thefootnote}}$
}
\date{}
\begin{document}

\maketitle

\begin{abstract}
We study the {\em min-cost chain-constrained spanning-tree} (abbreviated \mcst) problem:
find a min-cost spanning tree in a graph subject to degree constraints on a nested family
of node sets. 
We devise the {\em first} polytime algorithm that finds a spanning tree that (i) violates
the degree constraints by at most a constant factor {\em and} (ii) whose cost is
within a constant factor of the optimum. 
Previously, only an algorithm for {\em unweighted} \cst was known~\cite{olver}, which
satisfied (i) but did not yield any cost bounds. This also yields the first result that
obtains an $O(1)$-factor for {\em both} the cost approximation and violation of
degree constraints for any spanning-tree problem with general degree bounds on node sets, 
where an edge participates in a super-constant number of degree constraints.

A notable feature of our algorithm is that we {\em reduce} \mcst to unweighted \cst (and
then utilize~\cite{olver}) 
via a novel application of {\em Lagrangian duality} to simplify the {\em cost structure}
of the underlying problem and obtain a decomposition into certain uniform-cost
subproblems.  

We show that this Lagrangian-relaxation based idea is in fact applicable more generally
and, for any cost-minimization problem with packing side-constraints, yields a reduction
from the weighted to the unweighted problem. We believe that this reduction is of
independent interest. As another application of our technique, 
we consider the {\em $k$-budgeted matroid basis} problem, where we build upon a recent 
rounding algorithm of~\cite{BansalN16} to obtain an improved $n^{O(k^{1.5}/\e)}$-time 
algorithm that returns a solution that satisfies (any) one of the budget constraints
exactly and incurs a $(1+\e)$-violation of the other budget constraints. 
\end{abstract}

\section{Introduction}
Constrained spanning-tree problems, where one seeks a minimum-cost spanning tree
satisfying additional ($\{0,1\}$-coefficient) packing constraints, constitute an important 
and widely-studied class of problems. 
In particular, when the packing constraints correspond to node-degree bounds, we obtain
the classical {\em min-cost bounded-degree spanning tree} (\mbdst) problem, which
has a rich history of
study~\cite{FurerR94,KonemannR02,KonemannR03,ChaudhuriRRT09,Goemans06,SinghL07} 
culminating in the work of~\cite{SinghL07} that yielded an optimal result for \mbdst.
Such degree-constrained network-design problems arise in diverse areas including VLSI
design, vehicle routing and communication networks (see, e.g., the references
in~\cite{RaviMRRH01}), and their study has led to the development of powerful techniques
in approximation algorithms.   

Whereas the {\em iterative rounding and relaxation} technique introduced
in~\cite{SinghL07} (which extends the iterative-rounding framework of~\cite{Jain01})
yields a versatile technique for handling node-degree constraints 
(even for more-general network-design problems), 
we have a rather limited understanding of 
spanning-tree problems with more-general degree constraints,
such as constraints $|T\cap \dt(S)|\leq b_S$ for sets $S$ in some (structured) 
family $\Sc$ of node sets.%
\footnote{Such general degree constraints arise in the context of finding 
{\em thin trees}~\cite{AsadpourGMOS10}, 
where $\Sc$ consists of all node sets, which turn out to be a very useful tool in
devising approximation algorithms for {\em asymmetric TSP}.} 
A fundamental impediment here is our inability to leverage the techniques
in~\cite{Goemans06,SinghL07}. 
The few known results yield: (a) (sub-) optimal cost, but a {\em super-constant} additive-
or multiplicative- factor violation of the degree
bounds~\cite{BansalKN09,AsadpourGMOS10,ChekuriVZ10,BansalKKNP13};    
or 
(b) a multiplicative $O(1)$-factor violation of the degree bounds (when $\Sc$ is a nested
family), but {\em no cost guarantee}~\cite{olver}.  
In particular, in stark contrast to the results known for node-degree-bounded
network-design problems, there is no known algorithm that yields an $O(1)$-factor cost 
approximation {\em and} an (additive or multiplicative) $O(1)$-factor violation of the
degree bounds. (Such guarantees are only known 
when each edge participates in $O(1)$ degree constraints~\cite{BansalKKNP13}; see
however~\cite{Zenklusen12} for an exception.) 

We consider the {\em min-cost chain-constrained spanning-tree} (\mcst) problem introduced
by~\cite{olver}, 
which is perhaps the most-basic setting involving general degree bounds where there is a 
significant gap in our understanding vis-a-vis node-degree bounded problems. 
In \mcst, we are given an undirected connected graph $G = (V,E)$, nonnegative edge costs 
$\{c_e\}$, a nested family $\Sc$ (or {\em chain}) of node sets 
$S_1 \subsetneq S_2 \subsetneq \dots \subsetneq S_{\ell}\subsetneq V$, and integer degree 
bounds $\{b_S\}_{S\in\Sc}$. 
The goal is to find a minimum-cost spanning tree $T$ such that $|\dt_T(S)|\leq b_S$ for
all $S\in\Sc$, where $\dt_T(S):=T\cap\dt(S)$.
Olver and Zenklusen~\cite{olver} give an algorithm for {\em unweighted \cst} that returns
a tree $T$ such that $|\dt_T(S)|=O(b_S)$ (i.e., there is {\em no bound on $c(T)$}), and
show that, for some $\rho>0$, it is \npcomplete to obtain an additive
$\rho\cdot\frac{\log|V|}{\log\log|V|}$ violation of the degree bounds. 
We therefore focus on bicriteria $(\al,\beta)$-guarantees for \mcst, where the tree $T$
returned satisfies $c(T)\leq\al\cdot\OPT$ and $|\dt_T(S)|\leq\beta\cdot b_S$ for all
$S\in\Sc$. 

\paragraph{Our contributions.}
Our main result is the {\em first} $\bigl(O(1),O(1)\bigr)$-approximation
algorithm for \mcst. Given any $\ld>1$, our algorithm returns a tree $T$ with
$c(T)\leq\frac{\ld}{\ld-1}\cdot\OPT$ {\em and} $|\dt_T(S)|\leq 9\ld\cdot b_S$ for all
$S\in\Sc$, using the algorithm of~\cite{olver} for unweighted \cst, denoted $\ozalg$, as a
black box (Theorem~\ref{mcstthm}).  
As noted above, this is also the {\em first} algorithm that achieves an
$\bigl(O(1),O(1)\bigr)$-approximation 
for any spanning-tree problem with general degree constraints 
where an edge belongs to a super-constant number of degree constraints.

We show in Section~\ref{genredn} that our techniques 
are applicable more generally. 
We give a {\em reduction} showing that for {\em any} cost-minimization problem with
packing side-constraints, 
if we have an algorithm for the {\em unweighted} problem that returns a solution with
an $O(1)$-factor violation of the packing constraints and satisfies a certain property, 
then one can utilize it 
to obtain an $\bigl(O(1),O(1)\bigr)$-approximation
for the cost-minimization problem. 
Furthermore, we show that if the algorithm for the unweighted counterpart satisfies a
stronger property, then we can utilize it to obtain a 
$\bigl(1,O(1)\bigr)$-approximation (Theorem~\ref{twosidethm}). 

We believe that our reductions are of independent interest and will be useful in 
other settings as well. Demonstrating this, we show an application to the 
{\em $k$-budgeted matroid basis} problem, wherein we seek to find a basis satisfying $k$
budget constraints. Grandoni et al.~\cite{Grandoni2014} devised an $n^{O(k^2/\e)}$-time
algorithm that returned a $(1,1+\e,\ldots,1+\e)$-solution: i.e., the solution satisfies
(any) one budget constraint exactly and violates the other budget constraints by a
$(1+\e)$-factor (if the problem is feasible). Very recently, Bansal and
Nagarajan~\cite{BansalN16} improved the running time to $n^{O(k^{1.5}/\e)}$ but return only 
a $(1+\e,\ldots,1+\e)$-solution.  
Applying our reduction (to the algorithm in~\cite{BansalN16}), we obtain the 
{\em best of both worlds}: we return a  $(1,1+\e,\ldots,1+\e)$-solution in
$n^{O(k^{1.5}/\e)}$-time (Theorem~\ref{bmbthm}). 

\medskip
The chief novelty in our algorithm and analysis, and the key underlying idea, is an 
unorthodox use of {\em Lagrangian duality}. 
Whereas typically Lagrangian relaxation is used to drop complicating constraints and
thereby simplify the constraint structure of the underlying problem, in contrast, we use
Lagrangian duality to simplify the {\em cost structure} of the underlying problem by
equalizing edge costs in certain subproblems.  
To elaborate (see Section~\ref{overview}), the algorithm in~\cite{olver} for
unweighted \cst can be viewed as taking a solution $x$ to the natural
linear-programming (LP) relaxation for \mcst, converting it to another feasible solution
$x'$ satisfying a certain structural property, and exploiting this property to round $x'$
to a spanning tree. The main bottleneck here in handling costs (as also noted
in~\cite{olver}) is that 
$c^\intercal x'$ could be much larger than $c^\intercal x$ since the conversion 
ignores the $c_e$s and works with an alternate ``potential'' function.

Our crucial insight is that {\em we can exploit Lagrangian duality to obtain 
perturbed edge costs $\{c^{y^*}_e\}$ such that the change in perturbed cost due to the   
conversion process is bounded}.  
Loosely speaking, 
if 
the conversion process shifts weight from $x_f$ to $x_e$, then we ensure 
that $c^{y^*}_e=c^{y^*}_f$ (see Lemma~\ref{equalize}); thus,
$(c^{y^*})^\intercal x=(c^{y^*})^\intercal x'$!
The perturbation also ensures that applying $\ozalg$ to $x'$ yields a tree whose
perturbed cost is equal to $(c^{y^*})^\intercal x'=(c^{y^*})^\intercal x$. 
Finally, we show that for an optimal LP solution $x^*$, the ``error'' $(c^{y^*}-c)^\intercal x^*$   
incurred in working with the $c^{y^*}$-cost is $O(\OPT)$; this yields the 
$\bigl(O(1),O(1)\bigr)$-approximation.  

We extend the above idea to an arbitrary cost-minimization problem with packing
side-constraints as follows. 
Let $x^*$ be an optimal solution to the LP-relaxation, and $\Pc$ be the polytope obtained
by dropping the packing constraints.    
We observe that the same Lagrangian-duality based perturbation ensures that all points on
the minimal face of $\Pc$ containing $x^*$ have the same perturbed cost. 
Therefore, if we have an algorithm for the unweighted problem that rounds $x^*$ to a point
$\hx$ on this minimal face, then we again obtain that $(c^{y^*})^\intercal \hx=(c^{y^*})^\intercal x^*$,
which 
then leads to an $\bigl(O(1),O(1)\bigr)$-approximation (as in the case of \mcst). 

\paragraph{Related work.}
Whereas node-degree-bounded spanning-tree problems 
have been widely studied, relatively few results are known for 
spanning-tree problems with general degree constraints for a family $\Sc$ of node-sets.
With the exception of the result of~\cite{olver} for unweighted \cst, 
these other results~\cite{BansalKN09,AsadpourGMOS10,ChekuriVZ10,BansalKKNP13} all
yield a tree of cost at most the optimum with an $\w(1)$ additive- or multiplicative-
factor violation of the degree bounds. 
Both~\cite{BansalKN09} and~\cite{BansalKKNP13} obtain additive
factors via iterative rounding and relaxation. The factor in~\cite{BansalKN09} is $(r-1)$
for an arbitrary $\Sc$, 
where $r$ is the maximum number of degree constraints involving an edge (which could be
$|V|$ even when $\Sc$ is a chain), while~\cite{BansalKKNP13} yields an $O(\log|V|)$ factor
when $\Sc$ is a laminar family (the factor does not improve when $\Sc$ is a chain). The
dependent-rounding techniques in~\cite{AsadpourGMOS10,ChekuriVZ10} yield a tree $T$
satisfying $|\dt_T(S)|\leq\min\bigl\{O\bigl(\frac{\log|\Sc|}{\log\log|\Sc|}\bigr)b_S,
(1+\e)b_S+O\bigl(\frac{\log|\Sc|}{\e}\bigr)\bigr\}$ 
{for all $S\in\Sc$, for any family $\Sc$.} 

For \mbdst, 
Goemans~\cite{Goemans06}
obtained the first $\bigl(O(1),O(1)\bigr)$-approximation; his result yields a tree of
cost at most the optimum and at most $+2$ violation of the degree bounds. This was subsequently
improved to an (optimal) additive $+1$ violation
by~\cite{SinghL07}. Zenklusen~\cite{Zenklusen12} considers an orthogonal generalization of
\mbdst, where there is a matroid-independence 
constraint on the edges incident to each node, and obtains a tree of cost at most the
optimum and ``additive'' $O(1)$ violation (defined appropriately) of the matroid
constraints. To our knowledge, this is the only prior work that obtains an
$O(1)$-approximation to both the cost and packing constraints for a constrained
spanning-tree problem where an edge participates in $\w(1)$ packing constraints (albeit
this problem is quite different from spanning tree with general degree constraints).  

Finally, we note that our Lagrangian-relaxation based technique is somewhat similar to
its use in~\cite{KonemannR02}. However, whereas~\cite{KonemannR02} uses this to reduce
uniform-degree \mbdst to the problem of finding an MST of minimum 
maximum degree, which is another {\em weighted} problem, we utilize Lagrangian relaxation
in a more refined fashion to reduce the weighted problem to its {\em unweighted}
counterpart. 

\newcommand{\lp}[1]{\ensuremath{(\text{\ref{lp}}_{{#1}})}\xspace}
\newcommand{\dlp}[1]{\ensuremath{(\text{D}_{{#1}})}\xspace}
\newcommand{\lpld}{\lp{\ld}}
\newcommand{\dlpld}{\dlp{\ld}}
\newcommand{\lpldy}{\lp{\ld,y}}
\newcommand{\dlpldy}{\dlp{\ld,y}}
\newcommand{\ldlp}[1]{\ensuremath{(\text{LD}_{{#1}})}\xspace}

\section{An LP-relaxation for \mcst and preliminaries} \label{sec:lp_relaxation}
We consider the following natural LP-relaxation for \mcst. Throughout, we use $e$ to index
the edges of the underlying graph $G=(V,E)$. For a set $S\sse V$, let $E(S)$ denote
$\{uv\in E: u,v\in S\}$,
and $\dt(S)$ denote the edges on the boundary of $S$. 
For a vector $z\in\R^E$ and an edge-set $F$, 
{we use $z(F)$ to denote $\sum_{e\in F}z_e$.}
\begin{alignat}{3}
\min & \quad & \sum_e c_ex_e & \tag{P} \label{lp} \\
\text{s.t.}  && x\bigl(E(S)\bigr) & \leq |S|-1 \qquad && 
\forall \es\neq S\subsetneq V \label{acyc} \\
&& x(E) & = |V|-1 \label{span} \\
&& x\bigl(\delta(S)\bigr) & \le b_S \qquad && \forall S \in \mathcal{S} \label{deg} \\
&& x & \geq 0. \label{noneg}
\end{alignat}
For any $x\in\R_+^E$, let $\supp(x):=\{e:x_e>0\}$ denote the support of $x$.
It is well known that the polytope, $\pst(G)$, defined by \eqref{acyc},
\eqref{span}, and \eqref{noneg} is the convex hull of spanning trees of $G$. 
We call points in $\pst(G)$ {\em fractional spanning trees}. We refer to
\eqref{acyc}, \eqref{span} as the {\em spanning-tree constraints}. 
We will also utilize \lpld, the modified version of \eqref{lp} where 
we replace \eqref{deg} with $x\bigl(\dt(S)\bigr)\leq\ld b_S$ for all $S\in\Sc$, where
$\ld\geq 1$. 
{Let $\OPT(\ld)$ denote the optimal value of \lpld, and let $\OPT:=\OPT(1)$.}

\paragraph{Preliminaries.}
A family $\Lc\sse 2^V$ of sets is a {\em laminar family} if for all $A, B\in\Lc$, we have 
$A\sse B$ or $B\sse A$ or $A\cap B=\es$. As is standard, we say that $A\in\Lc$ is a child
of $L\in\Lc$ if $L$ is the minimal set of $\Lc$ such that $A\subsetneq L$. 
For each $L\in\Lc$, let $G^\Lc_L=(V^\Lc_L,E^\Lc_L)$ be the graph obtained from
$\bigl(L,E(L)\bigr)$ by contracting the children of $L$ in $\Lc$; we drop the superscript
$\Lc$ when $\Lc$ is clear from the context.
 
Given $x\in\pst(G)$, define a {\em laminar decomposition} $\Lc$ of $x$ to 
be a (inclusion-wise) maximal laminar family of sets whose spanning-tree constraints are
tight at $x$, so $x\bigl(E(A)\bigr)=|A|-1$ for all $A\in\Lc$. 
Note that $V\in\Lc$ and $\{v\}\in\Lc$ for all $v\in V$.  
A laminar decomposition can be constructed in polytime (using network-flow techniques).  
For any $L\in\Lc$, let $x^\Lc_L$, or simply $x_L$ if $\Lc$ is clear from context, denote $x$
restricted to $E_L$. 
Observe that $x_L$ is a fractional spanning tree of $G_L$.

\section{An LP-rounding approximation algorithm} \label{rounding}

\subsection{An overview} \label{overview}
We first give a high-level overview. 
Clearly, if \eqref{lp} is infeasible, there is no spanning tree satisfying
the degree constraints, so in the sequel, we assume that \eqref{lp} is feasible. 
We seek to obtain a spanning tree $T$ of cost $c(T)=O(\OPT)$ such that
$|\delta_T(S)|=O(b_S)$ for all $S \in \mathcal{S}$, 
{where $\dt_T(S)$ is the set of edges of $T$ crossing $S$.}

In order to explain the key ideas leading to our algorithm, we first briefly discuss
the approach of Olver and Zenklusen~\cite{olver} for unweighted \cst.  
Their approach {\em ignores} the edge costs $\{c_e\}$ and instead starts with a feasible
solution $x$ to \eqref{lp} that minimizes a suitable (linear) potential function. They use
this potential function to argue that if $\Lc$ is a laminar decomposition of $x$, then
$(x,\Lc)$ satisfies a key structural property called {\em rainbow freeness}. Exploiting
this, they give a rounding algorithm, hereby referred to as $\ozalg$, that for every
$L\in\Lc$, rounds $x_L$ to a spanning tree $T_L$ of $G_L$ such that
$|\dt_{T_L}(S)|\approx O\bigl(x_L(\dt(S))\bigr)$ for all $S\in\Sc$, 
so that concatenating the $T_L$s 
yields a spanning tree $T$ of $G$ satisfying $|\dt_T(S)|=O\bigl(x(\dt(S))\bigr)=O(b_S)$
for all $S\in\Sc$ (Theorem~\ref{ozthm}).  
However, as already noted in~\cite{olver}, a fundamental obstacle towards generalizing
their approach to handle the weighted version (i.e., \mcst) is that in order to
achieve rainbow freeness, which is crucial for their rounding algorithm, one needs to
{\em abandon the cost function $c$ and work with an alternate potential function}.  

We circumvent this difficulty as follows. 
First, we note that the algorithm in~\cite{olver} can be equivalently viewed as rounding
an {\em arbitrary} solution $x$ to \eqref{lp} as follows. 
Let $\Lc$ be a laminar decomposition of $x$.
Using the same potential-function idea, we can convert $x$ to another solution $x'$ to
\eqref{lp} that admits a laminar decomposition $\Lc'$ refining $\Lc$ such that $(x',\Lc')$
satisfies rainbow freeness (see Lemma~\ref{rfree}), and then round $x'$ using $\ozalg$.
Of course, the difficulty noted above remains, since the move to rainbow freeness
(which again ignores $c$ and uses a potential function) does not yield {\em any} bounds on
the cost $c^\intercal x'$ relative to $c^\intercal x$.  
We observe however that there is one simple property (*) under which one can guarantee that 
$c^\intercal x'=c^\intercal x$, namely, 
if for every $L\in\Lc$, all edges in $\supp(x)\cap E_L$ have the same cost. 
However, it is unclear how to utilize this observation since there is no reason to expect
our instance to have this rather special property: for instance, all edges of $G$ could
have very different costs!   

Now let $x^*$ be an optimal solution to \eqref{lp} (we will later modify this somewhat) 
and $\Lc$ be a laminar decomposition of $x^*$.
The crucial insight that allows us to leverage property (*), and a key notable aspect
of our algorithm and analysis, is that {\em one can leverage Lagrangian
duality to suitably perturb the edge costs so that the perturbed costs satisfy
property (*)}. 
More precisely, letting $y^*\in\R_+^\Sc$ denote the optimal values of the dual variables
corresponding to constraints \eqref{deg}, if we define the perturbed cost of edge $e$
to be $c^{y^*}_e:=c_e+\sum_{S\in\Sc:e\in\dt(S)}y^*_S$, then 
{\em the $c^{y^*}$-cost of all edges in $\supp(x^*)\cap E_L$ are indeed equal, for every 
$L\in\Lc$} (Lemma~\ref{equalize}). 
In essence, this holds because for any $e'\in\supp(x^*)$, by complementary slackness, we
have
$c_{e'}=\text{(dual contribution to $e'$ from \eqref{acyc},\eqref{span})}
-\sum_{S\in\Sc:e'\in\dt(S)}y^*_S$.
Since any two edges $e, f \in \supp(x^*)\cap E_L$ appear in the {\em same sets of $\Lc$},
one can argue that the dual contributions to $e$ and $f$ from \eqref{acyc}, \eqref{span}
are {\em equal}, and thus, $c^{y^*}_e=c^{y^*}_f$.
 
Now since $(x^*,\Lc^*)$ satisfies (*) with the perturbed costs $c^{y^*}$, we can convert
$(x^*,\Lc^*)$ to $(x',\Lc')$ satisfying rainbow freeness without altering the perturbed
cost, and then round $x'$ to a spanning tree $T$ using $\ozalg$. This immediately yields
$|\dt_T(S)|=O(b_S)$ for all $S\in\Sc$. To bound the cost, we argue that
$c(T)\leq c^{y^*}(T)=\sum_e c^{y^*}_ex^*_e=c^\intercal x^*+\sum_{S\in\Sc} b_Sy^*_S$
(Lemma~\ref{cbound}), where the last equality follows from complementary slackness.    
(Note that the perturbed costs are used only in the analysis.)   
However, $\sum_{S\in\Sc} b_Sy^*_S$ need not be bounded in terms of $c^\intercal x^*$. To fix this, we
modify our starting solution $x^*$: we solve \lpld (which recall is \eqref{lp} with
inflated degree bounds $\{\ld b_S\}$), where $\ld>1$, to obtain $x^*$, and use this $x^*$ in
our algorithm.  
Now, letting $y^*$ be the optimal dual values corresponding to the inflated degree
constraints, 
a simple duality argument shows that
$\sum_{S\in\Sc} b_Sy^*_S\leq\frac{\OPT(1)-\OPT(\ld)}{\ld-1}$ (Lemma~\ref{subgrad}), which 
yields $c(T)=O(\OPT)$ (see Theorem~\ref{mcstthm}).  

A noteworthy feature of our algorithm is the rather unconventional use of
Lagrangian relaxation, 
where we use duality to simplify the {\em cost structure} (as opposed to the
constraint-structure) of the underlying problem by equalizing edge costs in certain
subproblems. This turns out 
to be the crucial ingredient that allows us to utilize the algorithm of~\cite{olver} for
unweighted \cst{} {\em as a black box} without worrying about the difficulties posed by (the
move to) rainbow freeness. In fact, as we show in Sections~\ref{genredn} and~\ref{optcost}, this 
Lagrangian-relaxation idea is applicable more generally, and yields a novel reduction from
weighted problems to their unweighted counterparts. We believe that this reduction is of
independent interest and will find use in other settings as well; indeed, we demonstrate
another application of our ideas in Section~\ref{sec:kbmb}.

\subsection{Algorithm details and analysis}
To specify our algorithm formally, we first define the rainbow-freeness property and state
the main result of~\cite{olver} (which we utilize as a black box) precisely.  

For an edge $e$, define $\Sc_e:=\{S\in\Sc: e\in\dt(S)\}$. Note that $\Sc_e$ could be empty.
We say that two edges $e, f\in E$ form a \emph{rainbow} if $\Sc_{e}\sse\Sc_{f}$ or
$\Sc_{f}\sse\Sc_e$. 
(This definition is slightly different from the one in~\cite{olver}, in that we
allow $\Sc_e=\Sc_f$.) 
We say that $(x,\mathcal{L})$ is a \emph{rainbow-free decomposition} if $\Lc$ is a laminar 
decomposition of $x$ and for every set $L \in \mathcal{L}$, no two edges of 
$\supp(x)\cap E_L$ form a rainbow.  
(Recall that $G_L=(V_L,E_L)$ denotes the graph obtained from $\bigl(L,E(L)\bigr)$ by
contracting the children of $L$.)  
The following lemma shows that one can convert an arbitrary decomposition $(x,\Lc)$ to a
rainbow-free one; we defer the proof to the Appendix. 
(As noted earlier, this lemma is used to equivalently view the algorithm in~\cite{olver}
as a rounding algorithm that rounds an arbitrary solution $x$ to \eqref{lp}.) 

\begin{lemma} \label{lem:achieving_rainbow_freeness} \label{rfree}
Let $x\in\pst(G)$ and $\Lc$ be a laminar decomposition of $x$. 
We can compute in polytime a fractional spanning tree $x'\in\pst(G)$ and a
rainbow-free decomposition $(x',\Lc')$ such that:
(i) $\supp(x') \subseteq \supp(x)$;
(ii) $\mathcal{L} \subseteq \mathcal{L}'$; and
(iii) $x'(\delta(S)) \le x(\delta(S))$ for all $S \in \mathcal{S}$.
\end{lemma}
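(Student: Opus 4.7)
The plan is to iteratively eliminate rainbows by small mass-transfer moves, preserving $\pst(G)$-membership and (i)--(iii) as invariants. Specifically, while some $L\in\Lc$ contains edges $e,f\in\supp(x)\cap E_L$ with $\Sc_e\subsetneq\Sc_f$, I would perform a shift $x\mapsto x+\e(\chi_e-\chi_f)$ with $\e>0$ chosen maximally subject to $x\in\pst(G)$. Two observations drive this move. First, since $\Sc_e\sse\Sc_f$, the only sets $S\in\Sc$ whose boundary weight changes are those in $\Sc_f\sm\Sc_e$, and each loses $\e$ mass; hence $x(\dt(S))$ weakly decreases for every $S\in\Sc$, giving invariant (iii). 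Second, for every $S\in\Lc$ one has either $L\sse S$ (so $\{e,f\}\sse E(L)\sse E(S)$) or $S\sse L$ with $S\neq L$, or $S\cap L=\es$ (and in the latter two cases $\{e,f\}\cap E(S)=\es$, since the two endpoints of each of $e,f$ lie in different super-nodes of $G_L$). Thus the shift preserves $x(E(S))=|S|-1$ for every $S\in\Lc$, so $\Lc$ remains a laminar family of tight sets at the new $x$. Invariant (i) is automatic since only $x_f$ can drop to zero during a shift.

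For termination I use a potential $\Phi(x)=\sum_g x_g\,h(\Sc_g)$, where $h$ is a strictly monotone height function on subsets of $\Sc$ (e.g., $h(A)=\sum_{S\in A} M^{\,\mathrm{index}(S)}$ for a sufficiently large constant $M$), so that every shift strictly decreases $\Phi$. After a maximal shift, either $x_f$ becomes $0$ (shrinking $\supp(x)$), or some new spanning-tree constraint $x(E(S^*))=|S^*|-1$ becomes tight with $e\in E(S^*)$ but $f\notin E(S^*)$. In the latter case $S^*$ must cross $L$ (since $L\sse S^*$ would force $f\in E(S^*)$ and $S^*\sse L$ would force $e\notin E(S^*)$), so I apply standard submodular uncrossing on the tight family to replace $S^*$ by $S^*\cap L\subsetneq L$, which is also tight, still satisfies $e\in E(S^*\cap L)$ and $f\notin E(S^*\cap L)$, and can be adjoined to $\Lc$ to yield a strictly larger laminar $\Lc'$. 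Since $|\supp(x)|$ and $|\Lc|$ are $\poly(|V|)$, the process terminates in polytime, and at termination no rainbow remains within any $E_L$ for $L\in\Lc'$.

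The main obstacle is this uncrossing step. A priori, the blocking tight set $S^*$ may cross several sets of $\Lc$ simultaneously, so one must uncross it iteratively against the current family to produce a single replacement $\hat S$ that is tight, laminar with all of $\Lc$, and still separates $e$ from $f$ in the sense $e\in E(\hat S)$, $f\notin E(\hat S)$---so that the enlarged $\Lc'$ continues to enjoy the side-profile property invoked above for future shifts. Carrying out this uncrossing together with the mechanical bookkeeping of potentials and supports is where the proof does its real work; once in place, properties (i)--(iii) follow directly from the invariants maintained across iterations.
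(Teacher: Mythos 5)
Your proposal takes a genuinely different route than the paper: it attempts an iterative local-search argument (mass shifts plus uncrossing plus a potential function), whereas the paper computes the rainbow-free decomposition in one shot by solving the auxiliary LP \eqref{potlp} --- minimizing $\sum_e w_e z_e$ for a weight $w$ strictly increasing in $|\Sc_e|$, subject to $z\in\pst(G)$, $\supp(z)\sse\supp(x)$, the degree bounds $z(\dt(S))\leq x(\dt(S))$, and $z(E(L))=|L|-1$ for all $L\in\Lc$ --- and then proves rainbow-freeness by a single perturbation-against-optimality argument, using that the tight spanning-tree constraints at $x'$ are spanned by the maximal laminar family $\Lc'$. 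The LP absorbs all the combinatorial bookkeeping that your iteration must manage by hand.

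There is a genuine gap precisely where you flag the ``main obstacle'': the uncrossing step is asserted but not carried out, and it is not routine bookkeeping. When a tight set $S^*$ blocks a shift, you need a tight $\hat S$, laminar with $\Lc$, with $e\in E(\hat S)$ and $f\notin E(\hat S)$; a naive intersection with a crossed set $A\in\Lc$ can drop an endpoint of $e$, and a naive union with $A$ can a priori bring in the missing endpoint of $f$. Making this work requires the stronger spanning-tree uncrossing fact that if $T$ and $A$ are tight and cross, then (beyond $T\cap A$ and $T\cup A$ being tight) there are \emph{no support edges between $T\setminus A$ and $A\setminus T$}. Starting from $T:=S^*\cap L$ and iteratively replacing $T\gets T\cup A$ for crossed $A\in\Lc$, that fact is exactly what excludes the dangerous case where one endpoint of $f$ lies in $A\setminus T$ and the other in $T\setminus A$; combined with the observation that $e,f\in E_L$ forces each of their endpoints into distinct children of $L$, so any $A\subsetneq L$ in $\Lc$ contains at most one endpoint of $e$ (and of $f$), one then keeps $e\in E(T\cup A)$ and $f\notin E(T\cup A)$ throughout. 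Until this is written out, the claim that $\hat S$ ``still separates $e$ from $f$'' is unsupported, and your induction does not go through. Two smaller issues: your loop condition $\Sc_e\subsetneq\Sc_f$ misses rainbows with $\Sc_e=\Sc_f$, which the paper's definition of rainbow deliberately includes and which your potential $\Phi$ cannot break ties on; and the family $\Lc'$ you build must still be completed to a \emph{maximal} laminar family of tight sets to be a laminar decomposition as defined (this is harmless --- refining the $E_L$'s cannot create new rainbows --- but needs to be stated).
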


\begin{theorem}[\cite{olver}] \label{key1} \label{ozthm}
There is a polytime algorithm, $\ozalg$, that 
given a fractional spanning tree $x' \in \pst(G)$ and a rainbow-free decomposition 
$(x',\mathcal{L}')$, returns a 
spanning tree $T_L\sse\supp(x')$ of $G_L$ for every $L\in\Lc'$ such that  
the concatenation $T$ of the $T_L$s is a spanning tree of $G$ satisfying
{$|\delta_T(S)| \le 9 x'\bigl(\delta(S)\bigr)$ for all $S \in \mathcal{S}$.}
\end{theorem}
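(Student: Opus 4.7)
The plan is to build $T$ by processing the laminar decomposition $\Lc'$ from the innermost sets outward, rounding the restriction $x'_L$ to a spanning tree $T_L$ of $G_L$ at each $L \in \Lc'$. Since $\Lc'$ is a laminar decomposition of $x'$, the spanning-tree constraint $x'\bigl(E(A)\bigr) = |A|-1$ is tight for every $A \in \Lc'$, which implies that $x'_L$ is a fractional spanning tree of $G_L$ (with $x'_L(E_L) = |V_L|-1$). Stitching integral spanning trees $T_L$ back together along the laminar hierarchy yields a spanning tree $T$ of $G$: the contracted nodes of $G_L$ correspond to children already spanned recursively, and the laminar nesting guarantees that there are no cycles or missing vertices after un-contraction.

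The heart of the proof is to choose $T_L \subseteq \supp(x') \cap E_L$ so that its crossings of each $S \in \Sc$ are close to the fractional value $x'\bigl(\delta(S)\cap E_L\bigr)$, and here I would lean crucially on rainbow-freeness. For $e \in \supp(x')\cap E_L$, viewed as an edge between two nodes of $G_L$, the set $\Sc_e$ is a contiguous ``interval'' of the totally-ordered chain $\Sc$; rainbow-freeness says these intervals are pairwise incomparable, so after sorting by (say) the smallest index they also become sorted by the largest index, i.e.\ they form a monotone staircase of intervals. A fixed $S \in \Sc$ then partitions $\supp(x') \cap E_L$ into a prefix and suffix of the staircase (those with $S \in \Sc_e$ versus those without), which is exactly the kind of structure that admits a deterministic rounding with small worst-case deviation.

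Concretely, I would start from any decomposition of $x'_L$ as a convex combination of integral spanning trees of $G_L$, and then iteratively perform edge-swaps (exchanging a fractional edge for another in the same fundamental cut/cycle) that respect this staircase structure. Because each swap moves mass only between two edges whose $\Sc$-intervals are either equal or adjacent in the staircase, the number of edges of $T_L$ crossing any particular $S$ changes by $O(1)$ per swap, and a charging argument bounds the total crossing of $S$ by a small constant times $x'\bigl(\delta(S)\cap E_L\bigr)$. Summing the per-level bound over $L \in \Lc'$ yields $|\delta_T(S)| \le 9\, x'\bigl(\delta(S)\bigr)$; the constant $9$ should come out of a two- or three-level accounting of the rounding loss.

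The main obstacle is obtaining the deterministic factor $9$ rather than an expectation or high-probability bound. Standard dependent rounding (e.g.\ pipage or swap rounding on the spanning-tree matroid) preserves marginals and gives negatively-correlated samples, but the worst-case crossings can blow up by a $\Theta(\log |\Sc|/\log\log|\Sc|)$ factor without additional structure. The rainbow-free staircase is precisely what rules out adversarial stacking of chain-intervals inside a single $L$, and the hardest part of the proof will be orchestrating the level-by-level rounding so that the bounded deviations from each $L \in \Lc'$ compose additively (without multiplying across laminar depth) to give a global multiplicative constant.
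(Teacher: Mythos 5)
This theorem is cited from Olver and Zenklusen~\cite{olver} and is used in the paper purely as a black box; the paper contains no proof of it, so there is no ``paper's own proof'' to compare against. I will instead evaluate your sketch on its own merits.

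You correctly identify the scaffolding: $\Lc'$ being a laminar decomposition of $x'$ makes each $x'_L$ a fractional spanning tree of $G_L$, the concatenation of the $T_L$'s un-contracts to a spanning tree of $G$ because $\Lc'$ refines down to singletons, and rainbow-freeness turns the family $\{\Sc_e : e\in\supp(x')\cap E_L\}$ into a staircase of pairwise-incomparable intervals of the chain, so that for any fixed $S\in\Sc$ the edges of $\supp(x')\cap E_L$ crossing $S$ form a contiguous block in the staircase order. These observations are all correct and match the setup Olver and Zenklusen need.

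The actual content of the theorem, however, is left unproved. Two concrete gaps: first, the claim that ``each swap moves mass only between two edges whose $\Sc$-intervals are equal or adjacent in the staircase'' is not something you can assume — in spanning-tree pipage/swap rounding a single step moves mass between a tree edge and a non-tree edge along a fundamental cycle or cut, and those two edges can sit arbitrarily far apart in the staircase; you would have to design a swap rule and prove that adjacency can always be enforced while still reaching an integral point, and no such rule is given. Second, and more fundamentally, the claimed inequality $|\delta_{T_L}(S)|\le O(1)\cdot x'(\delta(S)\cap E_L)$ per level is a \emph{multiplicative} per-level bound, and you never derive it — your swap/charging argument as phrased would at best give a per-level \emph{additive} $O(1)$ loss, which summed over the $\Theta(|V|)$ levels of $\Lc'$ does not yield a multiplicative $9$. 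You flag this yourself as ``the hardest part,'' and indeed it is exactly the part that constitutes the theorem; without it the sketch establishes only that the stitching step is sound, not the degree bound. To turn this into a proof you would need the actual OZ rounding scheme inside each $G_L$ (which exploits the staircase structure to pick edges so that a prefix of the staircase of fractional weight $w$ contributes at most roughly $w+O(1)$ tree edges, together with an argument that the $O(1)$ slack can be charged multiplicatively), which is genuinely different from generic pipage rounding.
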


We can now describe our algorithm quite compactly. Let $\ld>1$ be a parameter.
\begin{enumerate}[nosep]
\item Compute an optimal solution $x^*$ to \lpld, a laminar decomposition $\Lc$ of $x^*$. 

\item Apply Lemma~\ref{rfree} to $(x^*,\Lc)$ to obtain a rainbow-free decomposition
$(x',\Lc')$. 

\item Apply $\ozalg$ to the input $(x', \mathcal{L}')$
to obtain spanning trees $T^{\Lc'}_L$ of $G^{\Lc'}_L$ for every $L\in\Lc'$. Return the
concatenation $T$ of all the $T^{\Lc'}_L$s.
\end{enumerate}

\paragraph{Analysis.} 

We show that the above algorithm satisfies the following guarantee.

\begin{theorem} \label{theo:main} \label{mcstthm}
The above algorithm run with parameter $\lambda > 1$ 
returns a spanning tree $T$ satisfying $c(T)\leq\frac{\ld}{\ld-1}\cdot\OPT$ and
$|\dt_T(S)|\leq 9\ld b_S$ for all $S\in\Sc$.
\end{theorem}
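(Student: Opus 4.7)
The plan is to prove the degree guarantee and the cost guarantee separately. The former is essentially immediate from the results we have cited, while the latter is where the Lagrangian-duality perturbation foreshadowed in Section~\ref{overview} does the work.

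For the degree bounds, feasibility of $x^*$ for $\lpld$ gives $x^*\bigl(\delta(S)\bigr)\le\lambda b_S$ for every $S\in\Sc$. Lemma~\ref{rfree}(iii) yields $x'\bigl(\delta(S)\bigr)\le x^*\bigl(\delta(S)\bigr)$, and Theorem~\ref{ozthm} applied to the rainbow-free decomposition $(x',\Lc')$ then returns a spanning tree $T$ satisfying $|\delta_T(S)|\le 9\,x'\bigl(\delta(S)\bigr)\le 9\lambda b_S$, as required.

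For the cost, let $y^*\in\R_+^\Sc$ be the optimal dual multipliers for the constraints $x\bigl(\delta(S)\bigr)\le\lambda b_S$ in $\lpld$, and define the perturbed costs $c^{y^*}_e:=c_e+\sum_{S\in\Sc_e}y^*_S$. Since $y^*\ge\bo$, $c(T)\le c^{y^*}(T)$, so it suffices to bound $c^{y^*}(T)$. I will first establish the key identity
\[
 c^{y^*}(T)=(c^{y^*})^\intercal x^*.
\]
Lemma~\ref{equalize} says that for every $L\in\Lc$, all edges of $\supp(x^*)\cap E^{\Lc}_L$ carry a common perturbed cost $c_L$. The sets $\{E^{\Lc}_L\}_{L\in\Lc}$ partition $E$ (any edge $e=uv$ lies in $E^{\Lc}_{L_e}$ for the smallest $L_e\in\Lc$ containing both $u,v$), and by Lemma~\ref{rfree}(i), $T\sse\supp(x')\sse\supp(x^*)$, so $c^{y^*}(T)=\sum_{L\in\Lc}c_L\cdot|T\cap E^{\Lc}_L|$. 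For each $L\in\Lc\sse\Lc'$, the fact that $T$ arises as the concatenation of spanning trees $T_{L'}$ of the $G^{\Lc'}_{L'}$ forces $|T\cap E(L)|=|L|-1$, by an easy induction on the laminar tree of $\Lc'$; subtracting the analogous identity for every child $C$ of $L$ in $\Lc$ gives $|T\cap E^{\Lc}_L|=(|L|-1)-\sum_{C}(|C|-1)=x^*(E^{\Lc}_L)$, where the last equality uses tightness of the spanning-tree constraints for $\Lc$ at $x^*$. Summing, $c^{y^*}(T)=\sum_L c_L\,x^*(E^{\Lc}_L)=(c^{y^*})^\intercal x^*$.

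With the identity in hand, expanding $(c^{y^*})^\intercal x^*$ and applying complementary slackness of $y^*$ against $x\bigl(\delta(S)\bigr)\le\lambda b_S$ gives
\[
 (c^{y^*})^\intercal x^*=c^\intercal x^*+\sum_{S\in\Sc}y^*_S\,x^*\bigl(\delta(S)\bigr)=\OPT(\lambda)+\lambda\sum_{S\in\Sc}b_Sy^*_S.
\]
Lemma~\ref{subgrad}, obtained by comparing the Lagrangian dual values at $1$ and at $\lambda$, supplies $\sum_{S\in\Sc}b_Sy^*_S\le(\OPT-\OPT(\lambda))/(\lambda-1)$, whence
\[
 c(T)\le\OPT(\lambda)+\frac{\lambda(\OPT-\OPT(\lambda))}{\lambda-1}=\frac{\lambda\OPT-\OPT(\lambda)}{\lambda-1}\le\frac{\lambda}{\lambda-1}\,\OPT.
\]
The main obstacle is the identity $c^{y^*}(T)=(c^{y^*})^\intercal x^*$: once Lemma~\ref{equalize} equalizes the perturbed costs within each cell of the $\Lc$-decomposition, the remaining laminar/spanning-tree bookkeeping falls out as above, and the cost bound then reduces to the standard Lagrangian-duality computation sketched in the last paragraph.
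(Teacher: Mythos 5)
Your proof is correct and takes essentially the same route as the paper: the degree bound comes verbatim from Lemma~\ref{rfree}(iii) and Theorem~\ref{ozthm}, and the cost bound is exactly the paper's chain $c(T)\le c^{y^*}(T)=(c^{y^*})^\intercal x^*=\OPT(\lambda)+\lambda\sum_S b_S y^*_S$ followed by Lemma~\ref{subgrad}. The only cosmetic difference is that you inline the content of Lemma~\ref{cbound} with more explicit laminar bookkeeping (tracking $|T\cap E(L)|$ and $|T\cap E^{\Lc}_L|$ directly), whereas the paper reaches the same identity by observing that $T\cap E_L$ is a spanning tree of $G_L$ and $x^*_L$ a fractional spanning tree of $G_L$; and your $y^*$ is defined as the optimal LP-dual multipliers for \lpld, which is identified with the optimal Lagrangian-dual solution used in Lemmas~\ref{equalize} and~\ref{subgrad} via Lemma~\ref{lpduality}.
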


The bound on $|\dt_T(S)|$ follows immediately from Lemma~\ref{rfree} and
Theorem~\ref{ozthm} since $x^*$, and hence $x'$ obtained in step 2, is a feasible solution
to \lpld. So we focus on bounding $c(T)$. This will follow from three things. First, we
define the perturbed $c^{y^*}$-cost precisely. Next, Lemma~\ref{equalize} proves the key
result that for every $L\in\Lc$, all edges in $\supp(x^*)\cap E_L$ have the same perturbed
cost. Using this it is easy to show that 
$c(T)\leq c^{y^*}(T)=\sum_e c^{y^*}_ex^*_e=\OPT(\ld)+\ld\sum_{S\in\Sc} b_Sy^*_S$
(Lemma~\ref{cbound}). Finally, we show that   
$\sum_{S\in\Sc} b_Sy^*_S\leq\frac{\OPT-\OPT(\ld)}{\ld-1}$ (Lemma~\ref{subgrad}), which
yields the bound stated in Theorem~\ref{mcstthm}. 

To define the perturbed costs, we consider the Lagrangian dual of \lpld obtained by
dualizing the (inflated) degree constraints $x\bigl(\dt(S)\bigr)\leq\ld b_S$ for all
$S\in\Sc$: 
\begin{equation}
\max_{y \in \mathbb{R}_+^\mathcal{S}} \quad \biggl(g_\ld(y)\ \ :=\ \ 
\min_{x \in \pst(G)}\Bigl(\sum_e c_ex_e+\sum_{S\in\Sc}\bigl(x(\dt(S))-\ld b_S)y_S\Bigr)\biggr).
\tag{LD$_\ld$} \label{ldual} 
\end{equation}
For $y \in \mathbb{R}^\mathcal{S}$, 
let $\ldobj_{\lambda, y}(x) := 
\sum_{e}c_e x_e+\sum_{S \in \mathcal{S}} \bigl(x(\delta(S)) - \lambda b_S\bigr) y_S
= \sum_{e} c^y_e x_e - \ld\sum_{S \in \mathcal{S}} b_S y_S$
denote the objective function of the LP that defines $g_\ld(y)$, where 
$c^y_e := c_e + \sum_{S \in \mathcal{S} : e \in \delta(S)} y_S$. 

Let $y^*$ be an optimal solution to \eqref{ldual}. Our {\em perturbed costs} are $\{c^{y^*}_e\}$. We prove the following preliminary lemma, then proceed to show that the perturbed costs satisfy property (*).

\begin{lemma} \label{lpduality}
We have $g_{\lambda}(y^*) = \ldobj_{\lambda, y^*}(x^*)=\OPT(\ld)$.
\end{lemma}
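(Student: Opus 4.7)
Plan: I will establish the chain $g_\ld(y^*) \le \ldobj_{\ld, y^*}(x^*) \le \OPT(\ld)$ and then invoke LP strong duality to show $g_\ld(y^*) = \OPT(\ld)$, which forces equality throughout and proves the lemma.

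For the left inequality, note that $x^*$ is feasible for \lpld, so in particular $x^* \in \pst(G)$. By the definition of $g_\ld$,
\[
\ldobj_{\ld, y^*}(x^*) \ \ge\ \min_{x \in \pst(G)} \ldobj_{\ld, y^*}(x) \ =\ g_\ld(y^*).
\]
For the right inequality, expand
\[
\ldobj_{\ld, y^*}(x^*) \ =\ \sum_e c_e x^*_e + \sum_{S \in \Sc} y^*_S \bigl(x^*(\dt(S)) - \ld b_S\bigr).
\]
The first sum equals $\OPT(\ld)$ by the optimality of $x^*$ for \lpld, and each term in the second sum is nonpositive because $y^*_S \ge 0$ and $x^*(\dt(S)) \le \ld b_S$ (since $x^*$ satisfies the inflated degree bounds of \lpld). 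Hence $\ldobj_{\ld, y^*}(x^*) \le \OPT(\ld)$.

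The remaining step is strong duality. Because $\pst(G)$ is a polytope, $g_\ld(y)$ is itself the optimal value of an LP in $x$ for each fixed $y \ge 0$, and \eqref{ldual} is precisely the Lagrangian dual of \lpld obtained by dualizing the inflated degree constraints $x(\dt(S)) \le \ld b_S$. Standard LP strong duality (applied to \lpld, which is feasible since $\ld \ge 1$ and we have assumed \eqref{lp} is feasible) then gives $g_\ld(y^*) = \OPT(\ld)$. Combining this with the chain above yields $g_\ld(y^*) = \ldobj_{\ld, y^*}(x^*) = \OPT(\ld)$, as desired. The only nontrivial step is cleanly invoking the strong-duality equivalence between \eqref{ldual} and \lpld; the rest is direct substitution using feasibility of $x^*$ and nonnegativity of $y^*$.
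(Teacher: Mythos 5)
Your proof is correct, and it takes a genuinely different route from the paper's. You sandwich $\ldobj_{\ld,y^*}(x^*)$ between $g_\ld(y^*)$ and $\OPT(\ld)$ using only feasibility of $x^*$ for \lpld{} and nonnegativity of $y^*$, and then close the gap by appealing to the standard fact that Lagrangian relaxation of an LP has no duality gap. The paper instead proves $g_\ld(y^*)=\OPT(\ld)$ from scratch: it rewrites $g_\ld(y)+\ld\sum_S b_Sy_S$ as the optimal value of an explicit inner LP $\lpldy$, passes to its dual $\dlpldy$, observes that $\max_{y\ge 0}g_\ld(y)$ is then exactly the LP dual $\eqref{dlpld}$ of \lpld, and finally uses complementary slackness between $x^*$ and an optimal dual pair $(\mu^*,y^*)$ to evaluate $\ldobj_{\ld,y^*}(x^*)$ directly. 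The two approaches buy different things. Your argument is shorter and more modular, reducing the second equality to an elementary nonpositivity observation and the first to a single citation of the no-gap theorem for LPs; this is essentially the same argument the paper later gives abstractly in the proof of Theorem~\ref{genrednthm}. The paper's more explicit derivation is more self-contained (it does not treat the Lagrangian no-gap theorem as a black box) and also surfaces the dual multipliers $\mu^*$ together with the complementary-slackness relation $c^{y^*}_e=-\sum_{S:e\in E(S)}\mu^*_S$ for $e\in\supp(x^*)$, which makes the intuition behind Lemma~\ref{equalize} transparent even though that lemma formally only needs the statement of Lemma~\ref{lpduality}. Either proof suffices for the downstream uses in the paper.
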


\begin{proof}
For any $y\in\R_+^\Sc$, we have $g_{\lambda}(y)+\ld\sum_{S\in\Sc} b_Sy_S=$
\begin{eqnarray*}
\underbrace{\Bigl(\min \sum_e c^y_ex_e \quad \text{s.t.} \quad x\bigl(E(S)\bigr)\leq|S|-1
\ \ \forall \es\neq S\subsetneq V, \quad x(E)=|V|-1, \quad x\geq 0\Bigr)}_{\lpldy}\ = \\
\underbrace{\Bigl(\max -\sum_{\es\neq S\sse V}(|S|-1)\mu_S \quad \text{s.t.} \quad 
-\sum_{\substack{\es\neq S\sse V: \\ e\in E(S)}}\mu_S\leq c^y_e \ \ \forall e\in E, \quad
\mu_S\geq 0\ \ \forall \es\neq S\subsetneq V\Bigr)}_{\dlpldy}
\end{eqnarray*}
where the second equality follows since \dlpldy is the dual of \lpldy.
It follows that
\begin{alignat*}{3}
g_\ld(y^*)\ = \ \max_{y\in\R_+^\Sc}g_\ld(y)\ =\ 
\max & \ \ & -\sum_{\emptyset \neq S \subseteq V} (|S|-1)\mu_S & - \ld\sum_{S \in \mathcal{S}} b_S y_S 
\tag{D$_\ld$} \label{dlpld} \\
\text{s.t.} & \ \ & - \sum_{\substack{\emptyset \neq S \subseteq V: \\ e \in E(S)}} \mu_S
& -\sum_{\substack{S\in\Sc: \\ e\in\dt(S)}}y_S \leq c_e \quad && \forall e \in E \\
&& y\geq 0, \quad & \mu_S \ge 0 && \forall \emptyset \neq S \subsetneq V.
\end{alignat*}
Notice that \eqref{dlpld} is the dual of \lpld, hence, $g_\ld(y^*)=\OPT(\ld)$. Moreover,
it also follows that $\hy$ is an optimal solution to \eqref{ldual} iff 
there exists $\hmu=(\hmu_S)_{\es\neq S\subseteq V}$ such that 
$(\hmu,\hy)$ is an optimal solution to \eqref{dlpld}. 

So let $\mu^*$ be such that $(\mu^*,y^*)$ is an optimal solution to \eqref{dlpld}. It
follows that $x^*$ and $(\mu^*,y^*)$ satisfy complementary slackness. So we have that 
if $\mu^*_S>0$ then $x^*\bigl(E(S)\bigr)=|S|-1$, and
if $x^*_e>0$ then 
$-\sum_{\emptyset \neq S\subseteq V:e \in E(S)}\mu^*_S-\sum_{S\in\Sc:e\in\dt(S)}y^*_S=c_e$, or 
equivalently $c^{y^*}_e=-\sum_{\es\neq S\subseteq V:e\in E(S)}\mu^*_S$.
Therefore, 
\begin{align*}
\ldobj_{\ld, y^*}(x^*) & =\sum_e c^{y^*}_ex^*_e-\ld\sum_{S\in\Sc} b_Sy^*_S 
=\sum_e\Bigl(-\sum_{\es\neq S\subseteq V:e\in E(S)}\mu^*_S\Bigr)x^*_e-\ld\sum_{S\in\Sc} b_Sy^*_S \\
& =-\sum_{\es\neq S\sse V}\mu^*_Sx^*\bigl(E(S)\bigr)-\ld\sum_{S\in\Sc} b_Sy^*_S \\
& =-\sum_{\es\neq S\sse V}(|S|-1)\mu^*_S-\ld\sum_{S\in\Sc} b_Sy^*_S = g_\ld(y^*). \qedhere
\end{align*}
\end{proof}

\begin{lemma} \label{lem:same_modified_cost} \label{equalize}
For any $L \in \mathcal{L}$, all edges of $\supp(x^*)\cap E_L$ have the same $c^{y^*}$-cost.
\end{lemma}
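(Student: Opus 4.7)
The plan: combine complementary slackness with a standard uncrossing argument to reduce the support of the optimal dual $\mu^*$ to $\mathcal{L}$; then conclude using only the laminar structure of $\mathcal{L}$. Recall from the proof of Lemma~\ref{lpduality} that there exist $\mu^*=(\mu^*_S)_{\es\neq S\subseteq V}$ (with $\mu^*_S\geq 0$ for $S\subsetneq V$) such that $(\mu^*,y^*)$ is an optimal solution to \eqref{dlpld}, and complementary slackness gives
\[
c^{y^*}_e \;=\; -\sum_{\es\neq S\subseteq V:\,e\in E(S)} \mu^*_S\qquad\text{for all }e\in\supp(x^*).
\]
It therefore suffices to show that $\mu^*$ can be chosen with support contained in $\mathcal{L}$; the rest is a purely combinatorial consequence of laminarity.

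The main technical step is this support reduction. I would apply a standard uncrossing routine to $\mu^*$: whenever two crossing sets $S,T$ both have $\mu^*_S,\mu^*_T>0$ (and are therefore tight by complementary slackness), set $\epsilon=\min(\mu^*_S,\mu^*_T)$ and shift mass $\epsilon$ from $\{S,T\}$ to $\{S\cup T,S\cap T\}$. The dual objective is preserved because $|S|+|T|=|S\cup T|+|S\cap T|$, and dual feasibility is preserved because $\chi^{E(S)}+\chi^{E(T)}\leq \chi^{E(S\cup T)}+\chi^{E(S\cap T)}$ pointwise; the standard uncrossing inequality for $x^*$ guarantees that $S\cup T$ and $S\cap T$ are again tight, and furthermore that all ``crossing edges'' have $x^*$-weight zero, which keeps complementary slackness intact for $e\in\supp(x^*)$. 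After finitely many iterations the support of $\mu^*$ is laminar. Invoking the maximality of $\mathcal{L}$ --- and, if necessary, first replacing $x^*$ by a vertex of \lpld's optimal face so that a Jain-style rank argument applies --- one pushes the support into $\mathcal{L}$ itself. I expect this final move, from an arbitrary maximal laminar family of tight sets to the specific family $\mathcal{L}$, to be the most delicate part of the proof.

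Once $\mu^*$ is supported on $\mathcal{L}$, the lemma follows from a simple observation about laminar families. Fix $e\in E_L$ with endpoints $u,v$. By the definition of $E_L$ (edges of the graph $G_L$ obtained by contracting the children of $L$ in $\mathcal{L}$), both $u$ and $v$ lie in $L$ but no single child of $L$ contains both of them. By laminarity, any $A\in\mathcal{L}$ with $\{u,v\}\subseteq A$ must contain $L$: any $A\in\mathcal{L}$ with $A\subsetneq L$ lies inside some child of $L$ and so cannot contain both endpoints; any $A\in\mathcal{L}$ disjoint from $L$ contains neither; and $\mathcal{L}$ being laminar rules out $A$ crossing $L$. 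Conversely, every $A\in\mathcal{L}$ with $A\supseteq L$ contains $e$. Hence $\{A\in\mathcal{L}:e\in E(A)\}=\{A\in\mathcal{L}:A\supseteq L\}$, which depends on $L$ alone. Therefore, for any $e,f\in\supp(x^*)\cap E_L$,
\[
c^{y^*}_e \;=\; -\sum_{A\in\mathcal{L}:\,A\supseteq L}\mu^*_A \;=\; c^{y^*}_f,
\]
which is the desired conclusion.
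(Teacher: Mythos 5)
Your approach is genuinely different from the paper's, and the gap you flag is a real one. The paper proves the lemma with a short primal exchange argument: by Lemma~\ref{lpduality}, $x^*$ minimizes $\ldobj_{\ld,y^*}(\cdot)$ over $\pst(G)$; for $e,f\in\supp(x^*)\cap E_L$, increasing $x^*_e$ by $\e$ and decreasing $x^*_f$ by $\e$ keeps the point in $\pst(G)$ for small $\e>0$ (this is where the argument of Lemma~\ref{rfree} is reused), so if $c^{y^*}_e<c^{y^*}_f$ the perturbed point would strictly beat $x^*$, a contradiction. You are running what amounts to the dual-side version of this: rather than exchanging in the primal, you express $c^{y^*}$ through the complementary-slackness identity and then try to massage the dual multipliers. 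Both proofs ultimately rest on the same span fact for maximal laminar families of tight sets --- the paper uses it to certify feasibility of the primal exchange, while you need it to move the support of $\mu^*$.

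The uncrossing you describe is sound as far as it goes: the pointwise inequality $\chi^{E(S)}+\chi^{E(T)}\le\chi^{E(S\cup T)}+\chi^{E(S\cap T)}$, preservation of the objective, tightness of $S\cup T$ and $S\cap T$, and vanishing of $x^*$ on crossing edges are all correct. But uncrossing only yields \emph{some} laminar support, not a subfamily of $\Lc$, and the transfer into $\Lc$ --- which you rightly call the delicate step --- is exactly where the write-up is incomplete. Your aside about first replacing $x^*$ by a vertex is a red herring: the lemma concerns the specific $\Lc$ the algorithm already computed, and changing $x^*$ changes $\Lc$. The clean way to close the gap is to notice that nonnegativity of $\mu^*$ is never used in your final laminarity computation; you only need the identity $c^{y^*}_e=-\sum_{S:\,e\in E(S)}\mu^*_S$ for $e\in\supp(x^*)$. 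Complementary slackness already places the support of $\mu^*$ on tight sets, and the standard span fact --- that $\{\chi^{E(L')}|_{\supp(x^*)}\}_{L'\in\Lc}$ spans the same space as $\{\chi^{E(A)}|_{\supp(x^*)}: A\text{ tight}\}$ --- lets you rewrite $\sum_S\mu^*_S\chi^{E(S)}|_{\supp(x^*)}$ as $\sum_{L'\in\Lc}\alpha_{L'}\chi^{E(L')}|_{\supp(x^*)}$ for some (possibly negative) $\alpha\in\R^{\Lc}$. With this substitution the uncrossing step is superfluous, and your closing paragraph --- which I checked and which is correct --- finishes the proof. Even so, the paper's exchange argument is considerably shorter and avoids invoking the span fact a second time.
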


\begin{proof}
Consider any two edges $e,f\in\supp(x^*)\cap E_L$.
Suppose for a contradiction that $c^{y*}_{e} < c^{y*}_{f}$. 
Obtain $\hat{x}$ from $x^*$ by increasing $x^*_{e}$ by $\epsilon$  
and decreasing $x^*_{f}$ by $\epsilon$ (so $\hx_{e'}=x^*_{e'}$ for all $e'\notin\{e,f\}$).
Using the same argument as in the proof of Lemma \ref{lem:achieving_rainbow_freeness}, one can show that $\hx\in\pst(G)$ for a sufficiently small $\e>0$.
Since $c^{y^*}_e<c^{y^*}_f$, we have 
$g_\ld(y^*)\leq \ldobj_{\ld, y^*}(\hx)<\ldobj_{\ld, y^*}(x^*)=g_\ld(y^*)$, where the last equality follows from Lemma \ref{lpduality}, and we obtain a
contradiction.   
\end{proof}

\begin{lemma} \label{cbound}
We have $c(T)\leq\sum_e c^{y^*}_ex^*_e=\sum_e c_ex^*_e+\ld\sum_{S\in\Sc}b_Sy^*_S$. 
\end{lemma}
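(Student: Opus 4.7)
The plan is to establish the two equalities separately. For the first, observe that $y^*\ge 0$ gives $c_e\le c^{y^*}_e$ pointwise, so $c(T)\le c^{y^*}(T)$, and it suffices to show $c^{y^*}(T)=\sum_e c^{y^*}_e x^*_e$. By Lemma~\ref{rfree} and Theorem~\ref{ozthm}, $T\sse\supp(x')\sse\supp(x^*)$. Since $V\in\Lc$, the sets $\{E^\Lc_L:L\in\Lc\}$ partition $E$ (each edge $e$ lies in $E^\Lc_L$ for the unique minimal $L\in\Lc$ containing both its endpoints), and Lemma~\ref{equalize} supplies a common $c^{y^*}$-cost, call it $\tc_L$, on $\supp(x^*)\cap E^\Lc_L$. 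Aggregating, the desired equality reduces to the combinatorial identity $|T\cap E^\Lc_L|=x^*(E^\Lc_L)$ for every $L\in\Lc$, since then $c^{y^*}(T)=\sum_{L\in\Lc}\tc_L\cdot|T\cap E^\Lc_L|=\sum_{L\in\Lc}\tc_L\cdot x^*(E^\Lc_L)=\sum_e c^{y^*}_e x^*_e$.

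This identity splits cleanly into a fractional and an integer side. On the fractional side, since $\Lc$ is a laminar decomposition of $x^*$, tightness of the spanning-tree constraints on $L$ and on its children $L_1,\ldots,L_m$ in $\Lc$ (which partition $L$) telescopes to $x^*(E^\Lc_L)=x^*(E(L))-\sum_i x^*(E(L_i))=(|L|-1)-\sum_i(|L_i|-1)=m-1=|V^\Lc_L|-1$. I expect the main obstacle to be the matching integer identity $|T\cap E^\Lc_L|=m-1$, which amounts to proving $|T\cap E(L'')|=|L''|-1$ for every $L''\in\Lc'$, i.e.\ that $T$ induces a spanning subtree on each $L''\in\Lc'$. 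My plan is to prove this by induction on the nested structure of $\Lc'$: for $L''\in\Lc'$ with children $L''_1,\ldots,L''_k$ in $\Lc'$, the inductive hypothesis furnishes spanning subtrees of $T$ on each $L''_j$, while Theorem~\ref{ozthm} supplies $T^{\Lc'}_{L''}\sse T$ as a spanning tree of $G^{\Lc'}_{L''}$ connecting the contracted children, and combining yields a spanning subtree on $L''$. Specializing to $L\in\Lc\sse\Lc'$ and its children in $\Lc$, and telescoping exactly as on the fractional side, gives $|T\cap E^\Lc_L|=m-1=x^*(E^\Lc_L)$, completing the first equality.

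The second equality is a direct consequence of Lemma~\ref{lpduality}. Expanding $\ldobj_{\ld,y^*}(x^*)=\sum_e c^{y^*}_e x^*_e-\ld\sum_{S\in\Sc}b_S y^*_S$ and noting that $\OPT(\ld)=\sum_e c_e x^*_e$ by optimality of $x^*$ for \lpld, the assertion $\ldobj_{\ld,y^*}(x^*)=\OPT(\ld)$ of Lemma~\ref{lpduality} rearranges to $\sum_e c^{y^*}_e x^*_e=\sum_e c_e x^*_e+\ld\sum_{S\in\Sc}b_S y^*_S$, as claimed.
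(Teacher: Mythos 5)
Your proof is correct and follows essentially the same route as the paper's: both establish $c(T)\leq c^{y^*}(T)$ via $y^*\geq 0$, both reduce $c^{y^*}(T)=\sum_e c^{y^*}_e x^*_e$ to the per-piece identity $|T\cap E^\Lc_L|=x^*(E^\Lc_L)=|V^\Lc_L|-1$ using Lemma~\ref{equalize} and the fact that $x^*_L$ and $T\cap E_L$ are respectively a fractional and integer spanning tree of $G_L$ (the paper states this directly, you supply the induction on $\Lc'$ that justifies it), and both derive the second equality from complementary slackness — you invoke it via Lemma~\ref{lpduality} and the optimality of $x^*$ for \lpld, while the paper expands $c^{y^*}_e$ and applies $y^*_S>0\Rightarrow x^*(\delta(S))=\ld b_S$ directly; these are interchangeable.
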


\begin{proof}
Observe that $c(T)\leq c^{y^*}(T)$ since $c_e\leq c^{y^*}_e$ for all $e\in E$ as 
$y^*\geq 0$. We now bound $c^{y^*}(T)$.
To keep notation simple, we use $G_L=(V_L,E_L)$ and $x^*_L$ to denote $G^\Lc_L$ and
$(x^*)^\Lc_L$ (which recall is $x^*$ restricted to $E^\Lc_L$) respectively, and
$G'_L=(V'_L,E'_L)$ and $x^{*'}_L$ to denote $G^{\Lc'}_L$ and $(x^*)^{\Lc'}_L$ respectively.

We have $c^{y^*}(T)=\sum_{L\in\Lc}c^{y^*}(T\cap E_L)$ since the sets $\{E_L\}_{L\in\Lc}$
partition $E$.  
Fix $L\in\Lc$. 
Note that $x^*_L$ is a fractional spanning tree of $G_L=(V_L,E_L)$ since for any 
$\es\neq Q\sse V_L$, if $R$ is the subset of $V$ corresponding to $Q$, and $A_1,\ldots,A_k$
are the children of $L$ whose corresponding contracted nodes lie in $Q$, we have 
$x^*_L\bigl(E_L(Q)\bigr)=x^*\bigl(E(R)\bigr)-\sum_{i=1}^k x^*\bigl(E(A_i)\bigr)
\leq |R\sm(A_1\cup\ldots\cup A_k)|+k-1=|Q|-1$ with equality holding when $Q=V_L$. 
Note that $T\cap E_L$ is a spanning tree of $G_L$ since $T$ is obtained by concatenating  
spanning trees for the graphs $\{G'_{L'}\}_{L'\in\Lc'}$, and $\Lc'$ refines $\Lc$. 
Also, all edges of $\supp(x^*)\cap E_{L}$ have the same $c^{y^*}$-cost by
Lemma~\ref{equalize}. So we have $c^{y^*}(T\cap E_L)=\sum_{e\in E_L}c^{y^*}_ex^*_e$. 
It follows that 
\begin{align*}
c^{y^*}(T) & =\sum_e c^{y^*}_ex^*_e=\sum_e \Bigl(c_ex^*_e+\sum_{S\in\Sc:e\in\dt(S)}y^*_Sx^*_e\Bigr)
\notag \\
& =\sum_e c_ex^*_e+\sum_{S\in\Sc}y^*_Sx^*\bigl(\dt(S)\bigr)
=\sum_e c_ex^*_e+\ld\sum_{S\in\Sc}b_Sy^*_S. \qedhere  
\end{align*}
\end{proof}

\begin{lemma} \label{subgrad}
We have $\sum_{S\in\Sc} b_Sy^*_S\leq\frac{\OPT(1)-\OPT(\ld)}{\ld-1}$.
\end{lemma}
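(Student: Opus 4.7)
The plan is to relate $y^*$ to the Lagrangian dual at $\ld=1$ and exploit the fact that $y^*$, while optimal for $g_\ld$, is merely feasible (giving a weak-duality lower bound) for \eqref{lp}.

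First I would evaluate $g_1(y^*)$ and $g_\ld(y^*)$ explicitly and subtract. Writing out the Lagrangian,
\[
g_\ld(y^*) = \min_{x\in\pst(G)}\Bigl(\sum_e c_ex_e + \sum_{S\in\Sc}y^*_S\bigl(x(\dt(S))-\ld b_S\bigr)\Bigr),
\]
and similarly for $g_1$, I observe that the two inner objectives differ only in the constant term $-\sum_S \ld b_S y^*_S$ versus $-\sum_S b_S y^*_S$. Since the $x$-dependent part is identical, the minimizer is the same and the two values differ by exactly $(\ld-1)\sum_{S\in\Sc} b_S y^*_S$, i.e.\ $g_1(y^*)-g_\ld(y^*)=(\ld-1)\sum_{S\in\Sc}b_Sy^*_S$.

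Next I would bound each term. By Lemma~\ref{lpduality}, $g_\ld(y^*)=\OPT(\ld)$. For the other term, I invoke weak duality: for any $x$ feasible to \eqref{lp} (the $\ld=1$ primal), we have $x(\dt(S))\leq b_S$ for all $S\in\Sc$, and $y^*\geq 0$, so $\sum_{S\in\Sc}y^*_S\bigl(x(\dt(S))-b_S\bigr)\leq 0$, whence $\sum_e c_ex_e+\sum_{S}y^*_S\bigl(x(\dt(S))-b_S\bigr)\leq \sum_e c_ex_e$. Minimizing the left side over $x\in\pst(G)$ can only decrease it further, so $g_1(y^*)\leq \sum_e c_e x_e$ for every feasible $x$, hence $g_1(y^*)\leq \OPT(1)$.

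Combining, $(\ld-1)\sum_{S\in\Sc}b_S y^*_S = g_1(y^*)-g_\ld(y^*)\leq \OPT(1)-\OPT(\ld)$, and dividing by $\ld-1>0$ gives the stated inequality. There is no real obstacle here; the only subtlety is being careful to apply weak duality in the correct direction (we need an upper bound on $g_1(y^*)$, which is exactly what weak duality provides since $y^*$ is a feasible, not necessarily optimal, dual vector for the $\ld=1$ problem).
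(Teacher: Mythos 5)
Your proof is correct and follows the same overall decomposition as the paper: evaluate $g_\ld(y^*)=\OPT(\ld)$ via Lemma~\ref{lpduality}, show $g_1(y^*)-g_\ld(y^*)=(\ld-1)\sum_S b_Sy^*_S$ from the constant difference of the two Lagrangian objectives, and bound $g_1(y^*)\leq\OPT(1)$. The one place you diverge is in that last bound: the paper invokes Lemma~\ref{lpduality} a second time (with $\lambda=1$), using the strong-duality identity $\OPT(1)=\max_y g_1(y)$ and then feasibility of $y^*$; you instead derive $g_1(y^*)\leq\OPT(1)$ directly from weak duality, noting that for any $x$ feasible to \lp{1} the Lagrangian penalty term is nonpositive, so $g_1(y^*)\leq\sum_ec_ex_e$. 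Your route is slightly more elementary since it does not need the strong-duality content of Lemma~\ref{lpduality} for the $\lambda=1$ case, only weak duality, which is self-contained; the paper's route is more uniform in that it reuses the same lemma twice. Also, your observation that the two Lagrangian objectives differ by an $x$-independent constant (so their minima differ by that constant) cleanly sidesteps the paper's explicit appeal to $x^*$ being a simultaneous minimizer of both objectives. Both arguments are valid and yield the same inequality.
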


\begin{proof}
By Lemma~\ref{lpduality}, we have that 
\[
\OPT(\ld) = g_{\ld}(y^*) = \ldobj_{\ld, y^*}(x^*).
\]

Using Lemma~\ref{lpduality} again, now with $\lambda = 1$, and since $y^*$ is a feasible
solution to $\ldlp{1}$, we obtain that
$\OPT(1) = \max_{y \in \mathbb{R}_+^{\mathcal{S}}} g_1(y) \ge g_1(y^*)$.
Note that the objective functions of the LPs defining $g_1(y^*)$ and $g_{\ld}(y^*)$ differ
by a constant: 
$\ldobj_{1,y^*}(x) - \ldobj_{\ld,y^*}(x) = (\ld - 1) \sum_{S \in \mathcal{S}} b_Sy^*_S$
for all $x \in \pst(G)$. Since $x^*$ is an optimal solution to $\min_{x \in \pst(G)}
\ldobj_{\ld, y^*}(x)$, it is also an optimal solution to $\min_{x \in \pst(G)} \ldobj_{1,
  y^*}(x)$. It follows that 
\[
\OPT(1) \ge g_1(y^*) = \ldobj_{1, y^*}(x^*)  \ .
\]

Therefore,
$\OPT(1)-\OPT(\ld) \ge \ldobj_{1, y^*}(x^*) - \ldobj_{\ld, y^*}(x^*) 
= (\ld - 1) \sum_{S \in \mathcal{S}} b_Sy^*_S$.
\end{proof}

\begin{proofof}{Theorem~\ref{mcstthm}}
As noted earlier, the bounds on $\dt_T(S)$ follow immediately from Lemma~\ref{rfree} and
Theorem~\ref{ozthm}: for any $S\in\Sc$, we have
$|\dt_T(S)|\leq 9x'\bigl(\dt(S)\bigr)\leq 9x^*\bigl(\dt(S)\bigr)\leq 9\ld b_S$.
The bound on $c(T)$ follows from Lemmas~\ref{cbound} and~\ref{subgrad} since 
$\sum_ec_ex^*_e=\OPT(\ld)$.
\end{proofof}

\section{A reduction from weighted to unweighted problems} \label{genredn}
We now show that our ideas are applicable more generally, and yield bicriteria
approximation algorithms for any cost-minimization problem with packing side-constraints,  
provided we have a suitable approximation algorithm for the {\em unweighted}
counterpart. Thus, our technique yields a {\em reduction} from weighted to unweighted
problems, which we believe is of independent interest. 

To demonstrate this, we first isolate the key properties of the rounding algorithm
$\rndalg$ used above for unweighted \cst that enable us to use it as a black box to
obtain our result for \mcst; this will yield an alternate, illuminating explanation of
Theorem~\ref{mcstthm}. 
Note that $\rndalg$ is obtained by combining the procedure in Lemma~\ref{rfree} and
$\ozalg$ (Theorem~\ref{ozthm}). 
First, we of course utilize that $\rndalg$ is an approximation algorithm for unweighted 
\cst, so it returns a spanning tree $T$ such that $|\dt_T(S)|=O\bigl(x^*(\dt(S))\bigr)$ for 
all $S\in\Sc$. Second, we exploit the fact that $\rndalg$ returns a tree $T$ that 
{\em preserves tightness of all spanning-tree constraints that are tight at $x^*$}.
{\em This is the crucial property that allows us to bound $c(T)$}, since this implies (as
we explain below) that $c^{y^*}(T)=\sum_e c^{y^*}_ex^*_e$, which then yields the bound on
$c(T)$ as before.  
The equality follows because the set of optimal solutions to the LP
$\min_{x\in\pst(G)}\ldobj_{\ld, y^*}(x)$ is a face of $\pst(G)$; thus {\em all} points on the
{\em minimal} face of $\pst(G)$ containing $x^*$ are optimal solutions to this LP,  
and the stated property implies that the characteristic vector of $T$ lies on this minimal 
face.   
In other words, while $\ozalg$ proceeds by exploiting the notions of rainbow freeness and
laminar decomposition, these notions are not essential to obtaining our result; {\em any}
rounding algorithm for unweighted \cst satisfying the above two properties can be utilized
to obtain our guarantee for \mcst. 

We now formalize the above two properties for an arbitrary cost-minimization problem with
packing side-constraints, and prove that they suffice to yield a bicriteria guarantee.   
Consider the following abstract problem, where $\mathcal{P} \subseteq \mathbb{R}_+^n$ is a fixed polytope: given $c \in \mathbb{R}_+^n$, $A \in \mathbb{R}_+^{m \times n}$, and $b \in \mathbb{R}_+^m$, find
\[
\min \ \ c^\intercal x \quad \text{s.t.} \quad x \text{ is an extreme point of } \mathcal{P}, 
\quad Ax \le b. \tag{Q$^\mathcal{P}$} \label{absp}
\] 
Observe that we can cast \mcst 
as a special case of \eqref{absp}, by taking $\mathcal{P}=\pst(G)$ (whose extreme points
are spanning trees of $G$), 
$c$ to be the edge costs, and $Ax \le b$ to be the degree constraints. 
Moreover, by taking $\Pc$ to be the convex hull of a bounded set $\{x\in\Z_+^n: Cx\leq d\}$ 
we can use \eqref{absp} to encode a discrete-optimization problem.

\newcommand{\abslp}[1]{\ensuremath{(\text{R}^\Pc_{{#1}})}\xspace}
\newcommand{\abslpld}{\abslp{\ld}}

We say that $\rndalg$ is a \emph{face-preserving rounding algorithm} (\fpra) for
unweighted \eqref{absp} if given any point $x\in\mathcal{P}$,  
$\rndalg$ finds in polytime an extreme point $\hx$ of $\mathcal{P}$ such that:

\noindent
{(P1) \quad $\hx$ belongs to the minimal face of $\mathcal{P}$ that contains $x$.} 

\noindent
We say that $\rndalg$ is a {\em $\beta$-approximation \fpra} (where $\beta \ge 1$) if
we {\em also} have: 

\noindent
{(P2) \quad $A\hx \le \beta Ax$.}

\newcommand{\prop}[1]{(P{#1})\xspace}

Let \abslpld denote the LP $\min\bigl\{c^\intercal x: x\in\Pc, \ Ax\leq\ld b\bigr\}$;
note that \abslp{1} is the LP-relaxation of \eqref{absp}.
Let $\absopt(\ld)$ denote the optimal value of \abslpld, and let $\absopt:=\absopt(1)$.
We say that an algorithm is a $(\rho_1,\rho_2)$-approximation algorithm for \eqref{absp}
if it finds in polytime an extreme point $\hx$ of $\Pc$ such that
$c^\intercal \hx\leq\rho_1\absopt$ and $A\hx\leq\rho_2b$. 

\begin{theorem} \label{genrednthm}
Let $\rndalg$ be a $\beta$-approximation \fpra 
for unweighted \eqref{absp}. Then, given any $\lambda > 1$, one can obtain a 
$\bigl(\frac{\lambda}{\lambda-1}, \beta \lambda\bigr)$-approximation algorithm for
\eqref{absp} using a single call to $\rndalg$.
\end{theorem}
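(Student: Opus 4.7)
The plan is to mirror the three-step argument used for \mcst in this abstract setting. The algorithm is: (1) compute an optimal LP solution $x^*$ to \abslpld; (2) apply $\rndalg$ to $x^*$ to obtain an extreme point $\hx$ of $\Pc$; (3) return $\hx$. Property \prop{2} immediately gives $A\hx \leq \beta Ax^* \leq \beta\ld b$, so only the cost bound requires real work.

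To control the cost, I would dualize the constraints $Ax\leq\ld b$ and consider $g_\ld(y):=\min_{x\in\Pc}\bigl(c^\intercal x+y^\intercal(Ax-\ld b)\bigr)$. The analog of Lemma~\ref{lpduality} --- a short LP-duality calculation paralleling the one carried out there, in which the inner LP is paired with its dual and combined with the dualization of $Ax\leq\ld b$ --- furnishes an optimal dual $y^*\geq 0$ with $g_\ld(y^*)=\absopt(\ld)$, such that $x^*$ attains the inner minimum $\min_{x\in\Pc}(c^{y^*})^\intercal x$, where $c^{y^*}:=c+A^\intercal y^*$ are the \emph{perturbed costs}. The key step, which abstracts Lemma~\ref{equalize}, is the observation that \emph{the entire minimal face $F$ of $\Pc$ containing $x^*$ consists of minimizers of $(c^{y^*})^\intercal x$ over $\Pc$}: the set of such minimizers is itself a face of $\Pc$, and since it contains $x^*$, it must contain $F$. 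By \prop{1}, $\hx\in F$; hence $(c^{y^*})^\intercal\hx=(c^{y^*})^\intercal x^*$, and consequently (the analog of Lemma~\ref{cbound})
\[
c^\intercal\hx \;\leq\; (c^{y^*})^\intercal\hx \;=\; (c^{y^*})^\intercal x^* \;=\; c^\intercal x^* + (y^*)^\intercal Ax^* \;\leq\; \absopt(\ld)+\ld\,(y^*)^\intercal b.
\]

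The remaining ingredient is the analog of Lemma~\ref{subgrad}: $(y^*)^\intercal b\leq\frac{\absopt-\absopt(\ld)}{\ld-1}$. That proof transfers verbatim: $x^*$ minimizes $c^\intercal x+(y^*)^\intercal(Ax-\ld b)$ over $\Pc$, and since the objective defining $g_1(y^*)$ differs from this only by the additive constant $(\ld-1)(y^*)^\intercal b$, we get $g_1(y^*)=\absopt(\ld)+(\ld-1)(y^*)^\intercal b$, while weak Lagrangian duality gives $\absopt\geq g_1(y^*)$. Combining the two bounds yields
$c^\intercal\hx\leq\absopt(\ld)+\ld\cdot\frac{\absopt-\absopt(\ld)}{\ld-1}=\frac{\ld\absopt-\absopt(\ld)}{\ld-1}\leq\frac{\ld}{\ld-1}\absopt$, as desired. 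The principal conceptual hurdle is recognizing \prop{1} as the \emph{correct} abstraction: the concrete combinatorial claim (Lemma~\ref{equalize}) that perturbed edge-costs are constant inside each $E_L$ gets replaced by the cleaner geometric fact that lying on the minimal face of $\Pc$ containing $x^*$ automatically forces optimality for \emph{any} linear objective over $\Pc$ that is minimized at $x^*$ --- no combinatorial structure on $\Pc$ is needed, only the face-preservation property of the rounding routine.
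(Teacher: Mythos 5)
Your proposal is correct and follows essentially the same route as the paper's proof sketch of Theorem~\ref{genrednthm}: you dualize the packing constraints, use strong LP duality to get $y^*$ with $x^*$ minimizing $(c^{y^*})^\intercal x$ over $\Pc$, invoke the fact that the optimizer set is a face of $\Pc$ containing the minimal face through $x^*$ so that \prop{1} forces $(c^{y^*})^\intercal\hx=(c^{y^*})^\intercal x^*$, and close with the abstract analog of Lemma~\ref{subgrad}. The only cosmetic difference is that you bound $(y^*)^\intercal Ax^*\leq\ld(y^*)^\intercal b$ by feasibility rather than via complementary slackness, which is equally valid and yields the identical final inequality.
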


\begin{proofsketch}
We dovetail the algorithm for \mcst and its analysis. We simply compute an 
optimal solution $x^*$ to \abslpld and round it to an extreme point $\hx$ of $\Pc$ using
$\rndalg$. By property \prop{2}, it is clear that $A\hx\leq\beta(Ax^*)\leq\beta\ld b$. 

For $y\in\R_+^m$, define $c^y:=c+A^\intercal y$.
To bound the cost, as before, we consider the
Lagrangian dual of \abslpld obtained by dualizing the side-constraints $Ax\leq\ld b$.
\begin{equation*}
\max_{y \in \mathbb{R}_+^m}\Bigl(h_\ld(y)\ :=\ \min_{x\in\Pc}\absldobj_{\ld, y}(x)\Bigr), 
\quad \text{where}\ \ \absldobj_{\ld, y}(x) := 
(c^{y})^\intercal x-\ld y^\intercal b.
\end{equation*}
Let $y^*=\argmax_{y\in\R_+^m}h_\ld(y)$. 
We can mimic the proof of Lemma~\ref{lpduality} to show that $x^*$ is an optimal solution
to $\min_{x\in\Pc}\absldobj_{\ld, y^*}(x)$. The set of optimal solutions to this LP
is a face of $\Pc$. So all points on the minimal face of $\Pc$ containing $x^*$ are
optimal solutions to this LP. By property \prop{1}, $\hx$ belongs to this minimal face
and so is an optimal solution to this LP. 
So $(c^{y^*})^\intercal \hx=(c^{y^*})^\intercal x^*=c^\intercal x^*+(y^*)^\intercal Ax^*=\absopt(\ld)+\ld(y^*)^\intercal b$, where 
the last equality follows by complementary slackness. Also, by the same arguments as in 
Lemma~\ref{subgrad}, we have $(y^*)^\intercal b\leq\frac{\absopt(1)-\absopt(\ld)}{\ld-1}$. 
Since $c\leq c^{y^*}$, we have $c^\intercal \hx\leq(c^{y^*})^\intercal \hx\leq\frac{\ld}{\ld-1}\cdot\absopt$.
\end{proofsketch}

\section{Towards a \boldmath $\bigl(1,O(1)\bigr)$-approximation algorithm for \eqref{absp}} \label{optcost}
A natural question that emerges from Theorems~\ref{mcstthm} and~\ref{genrednthm} is
whether one can obtain a $\bigl(1,O(1)\bigr)$-approximation, i.e., obtain a solution of
{\em cost at most $\absopt$}  that violates the packing side-constraints by an (multiplicative)
$O(1)$-factor. 
Such results are known for degree-bounded spanning tree problems
with various kinds of degree constraints~\cite{Goemans06,SinghL07,BansalKN09,Zenklusen12},
so, in particular, it is natural to ask whether such a result also holds for \mcst.
(Note that for \mcst, the dependent-rounding techniques
in~\cite{AsadpourGMOS10,ChekuriVZ10} yield a tree $T$ with $c(T)\leq\OPT$ and 
$|\dt_T(S)|\leq\min\bigl\{O\bigl(\frac{\log|\Sc|}{\log\log|\Sc|}\bigr)b_S,
(1+\e)b_S+O\bigl(\frac{\log|\Sc|}{\e}\bigr)\bigr\}$ for all $S\in\Sc$.) 
We show that our approach is versatile enough to yield such a guarantee provided we
assume a stronger property from the rounding algorithm $\rndalg$ for unweighted
\eqref{absp}. 

Let $A_i$ denote the $i$-th row of $A$, for $i=1,\ldots,m$.
We say that $\rndalg$ is an {\em $(\al,\beta)$-approximation \fpra{}}
for unweighted \eqref{absp} 
if {\em in addition} to properties \prop{1} and \prop{2}, it satisfies:
\begin{enumerate}[(P3), nosep, topsep=0.25ex, labelwidth=\widthof{(P3)}, leftmargin=!]
\item\label{prop3} \quad it rounds a feasible solution $x$ to \abslp{\al} 
to an extreme point $\hx$ of $\Pc$ satisfying
$A_i^\intercal\hat{x}\ge\frac{A_i^\intercal x}{\alpha}$ for every $i$ such that
$A_i^\intercal x=\al b_i$.
\end{enumerate}

\vspace{0.75ex}
\noindent
(For \mcst, property \prop{3} requires that $|\dt_T(S)|\geq b_S$ for every set
$S\in\mathcal{S}$ whose degree constraint (in \lp{\al}) is tight at the fractional 
spanning tree $x$.)

\begin{theorem} \label{twosidethm}
Let $\rndalg$ be an $(\al,\beta)$-approximation \fpra 
for unweighted \eqref{absp}. Then, 
one can obtain a $(1,\alpha\beta)$-approximation algorithm for \eqref{absp} using a single 
call to \rndalg.
\end{theorem}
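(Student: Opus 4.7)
The plan is to dovetail the argument from Theorem~\ref{genrednthm}, but with $\lambda := \al$, using property \prop{3} in the one place where Theorem~\ref{genrednthm} loses a constant factor in the cost. Concretely, first solve \abslp{\al} to obtain an optimal solution $x^*$, then apply $\rndalg$ to $x^*$ to get an extreme point $\hx$ of $\Pc$. The packing-violation bound is immediate: property \prop{2} gives $A\hx \le \beta Ax^* \le \al\beta b$. So the substance lies entirely in the cost analysis.

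For the cost, let $y^* = \argmax_{y\in\R_+^m} h_\al(y)$ and $c^{y^*} = c + A^\intercal y^*$. Mimicking Lemma~\ref{lpduality} in the abstract setting, $x^*$ is optimal for $\min_{x\in\Pc}\absldobj_{\al,y^*}(x)$; since the optimal set of this LP is a face of $\Pc$, property \prop{1} ensures that $\hx$ also lies in this optimal face. Hence $(c^{y^*})^\intercal \hx = (c^{y^*})^\intercal x^*$, which rearranges to
\[
c^\intercal \hx \ = \ c^\intercal x^* + (y^*)^\intercal (Ax^* - A\hx) \ = \ \absopt(\al) + (y^*)^\intercal (Ax^* - A\hx).
\]

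The crux, and the place where \prop{3} is essential, is to show that $(y^*)^\intercal(Ax^* - A\hx) \le \absopt(1) - \absopt(\al)$. By complementary slackness for the dual of \abslp{\al}, $y^*_i > 0$ forces $A_i^\intercal x^* = \al b_i$, so $(y^*)^\intercal A x^* = \al(y^*)^\intercal b$. For each such index $i$, property \prop{3} applied to $x^*$ gives $A_i^\intercal \hx \ge A_i^\intercal x^*/\al = b_i$; summing against $y^*\ge 0$ then yields $(y^*)^\intercal A\hx \ge (y^*)^\intercal b$. Subtracting produces $(y^*)^\intercal(Ax^* - A\hx) \le (\al - 1)(y^*)^\intercal b$. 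Combined with the analog of Lemma~\ref{subgrad} (obtained by replacing $\ld$ with $\al$ and $\pst(G)$ with $\Pc$ in that proof), which gives $(y^*)^\intercal b \le \frac{\absopt(1) - \absopt(\al)}{\al - 1}$, this yields $c^\intercal\hx \le \absopt(\al) + \bigl(\absopt(1) - \absopt(\al)\bigr) = \absopt$, as desired.

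The main obstacle I foresee is not the algebra above but rather verifying that the Lagrangian-duality lemmas from Section~\ref{rounding}, specifically Lemma~\ref{lpduality}, carry over cleanly to the abstract polytope $\Pc$: the original proof exploited the explicit LP description of $\pst(G)$ and the associated dual variables $\mu^*$, so the abstract version requires an appeal to strong LP duality for $\min_{x\in\Pc}\absldobj_{\al,y^*}(x)$ together with a complementary-slackness argument to pin down $(y^*)^\intercal A x^* = \al(y^*)^\intercal b$. Once that is settled, property \prop{3} provides exactly the missing ingredient that upgrades Theorem~\ref{genrednthm}'s $\frac{\al}{\al-1}$-approximation into an exact-cost guarantee, at the price of a larger violation factor $\al\beta$ instead of $\beta\al$ coming from the inflation of $b$ by $\al$ inside \abslp{\al}.
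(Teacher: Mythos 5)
Your proposal is correct and follows essentially the same approach as the paper: set $\lambda=\alpha$, invoke the argument from Theorem~\ref{genrednthm} to get $(c^{y^*})^\intercal \hx = (c^{y^*})^\intercal x^*$, and then use \prop{3} together with complementary slackness to establish $(y^*)^\intercal A\hx \ge (y^*)^\intercal b$, which upgrades the cost bound to $\absopt(1)$ via the analog of Lemma~\ref{subgrad}. The concern you raise at the end is also handled the same way in the paper, which notes that Lemma~\ref{lpduality} can be mimicked in the abstract setting by strong LP duality for $\min_{x\in\Pc}\absldobj_{\lambda,y^*}(x)$.
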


\begin{proof}
We show that applying the algorithm from Theorem~\ref{genrednthm} with $\lambda=\alpha$ yields the claimed
result. It is clear that the extreme point $\hx$ returned satisfies $A\hx\leq\al\beta b$. 
As in the proof of Theorem~\ref{genrednthm}, let $y^*$ be an optimal solution to
$\max_{y\in\R_+^m}h_\ld(y)$ (where $\ld=\al$). 
In Lemma~\ref{cbound} and the proof of Theorem~\ref{genrednthm}, we use the weak bound
$c^\intercal \hx\leq (c^{y^*})^\intercal \hx$. We tighten this to obtain the improved bound on $c^\intercal \hx$.
We have $c^\intercal \hx=(c^{y^*})^\intercal \hx-(y^*)^\intercal A\hx$, and 
\begin{equation*}
(y^*)^\intercal A\hx =\sum_{i:A_i^\intercal x^*=\ld b_i}y^*_i(A_i^\intercal \hx)\geq
\sum_{i:A_i^\intercal x^*=\ld b_i}\frac{y^*_iA_i^\intercal x^*}{\al} 
=\sum_{i:A_i^\intercal x^*=\ld b_i}y^*_ib_i=(y^*)^\intercal b.
\end{equation*}
The first and last equalities above follow because $y^*_i>0$ implies that 
$A_i^\intercal x^*=\ld b_i$. The inequality follows from property \prop{3}. 
Thus, following the rest of the arguments in the proof of Theorem~\ref{genrednthm}, we 
obtain that 
\begin{equation*}
c^\intercal \hx\leq(c^{y^*})^\intercal \hx-(y^*)^\intercal b=c^\intercal x^*+(\ld-1)(y^*)^\intercal b\leq\absopt(1). \qedhere
\end{equation*}
\end{proof}

\subsection{Obtaining an additive approximation for \eqref{absp} and cost at most
  $\absopt$ via an \fpra with two-sided additive guarantees} 

\newcommand{\abslpadditive}[1]{\ensuremath{(\text{W}^\Pc_{{#1}})}\xspace}
\newcommand{\lpdelta}{\abslpadditive{\Delta}}
\newcommand{\lpzero}{\abslpadditive{\vec{0}}}

We now present a variant of Theorem~\ref{twosidethm} that shows that we can achieve cost
at most $\absopt$ and additive approximation for the packing side constraints using an
\fpra with two-sided {\em additive} guarantees. 
We give an application of this result in 
Section~\ref{sec:kbmb}, where we utilize it to obtain improved guarantees for the
$k$-budgeted matroid basis problem.

\begin{theorem} \label{twosidethm_additive}
Let $\rndalg$ be an \fpra for unweighted \eqref{absp} that given $x\in\Pc$ returns an
extreme point $\hx$ of $\Pc$ such that $Ax - \Delta \le A\hx \le Ax + \Delta$, where
$\Delta \in \mathbb{R}_+^m$ may depend on $A$ and $c$ (but not on $b$). Using 
a single call to \rndalg, we can obtain an extreme point $\tx$ of $\mathcal{P}$ such that
{$c^\intercal \tx \le \absopt$ and $A\tx \le b + 2\Delta$.}
\end{theorem}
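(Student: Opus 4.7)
The plan is to adapt the proof of Theorem~\ref{twosidethm}, replacing the multiplicatively inflated LP \abslpld with the \emph{additively} inflated LP \lpdelta given by $\min\{c^\intercal x : x\in\Pc,\ Ax\le b+\Delta\}$. Concretely, I would (i) solve \lpdelta to obtain an optimal $x^*$, (ii) apply $\rndalg$ to $x^*$ to obtain an extreme point $\tx$ of $\Pc$ satisfying $Ax^*-\Delta\le A\tx\le Ax^*+\Delta$, and (iii) return $\tx$. Note that the assumption that $\Delta$ may depend on $A,c$ but not on $b$ is exactly what allows $\rndalg$ to be run on the instance \lpdelta, whose right-hand side is $b+\Delta$. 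The packing bound $A\tx\le Ax^*+\Delta\le b+2\Delta$ is then immediate from the upper half of the additive guarantee, so the real work lies in proving $c^\intercal \tx\le\absopt$.

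For the cost bound I would Lagrangian-dualize the side-constraints of \lpdelta: for $y\in\R_+^m$ define
\[
h_\Delta(y):=\min_{x\in\Pc}\bigl(c^\intercal x+y^\intercal(Ax-b-\Delta)\bigr)=\min_{x\in\Pc}(c^y)^\intercal x\ -\ y^\intercal(b+\Delta),
\]
and let $y^*\in\argmax_{y\in\R_+^m}h_\Delta(y)$. Mimicking Lemma~\ref{lpduality}, strong LP-duality yields that $h_\Delta(y^*)$ equals the optimal value of \lpdelta and that $x^*$ attains the inner minimum defining $h_\Delta(y^*)$, i.e., $x^*\in\argmin_{x\in\Pc}(c^{y^*})^\intercal x$. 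Because the minimizers of a linear function over $\Pc$ form a face of $\Pc$, property \prop{1} tells us that $\tx$ lies on the minimal face of $\Pc$ containing $x^*$, so $(c^{y^*})^\intercal \tx=(c^{y^*})^\intercal x^*$. Complementary slackness (since $y^*_i>0\Rightarrow A_i^\intercal x^*=b_i+\Delta_i$) simplifies this to $(c^{y^*})^\intercal x^*=c^\intercal x^*+(y^*)^\intercal(b+\Delta)$.

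The crucial new step is to invoke the \emph{two-sided} additive guarantee $A\tx\ge Ax^*-\Delta$ in order to lower-bound $(y^*)^\intercal A\tx$. Combining this inequality with $(y^*)^\intercal Ax^*=(y^*)^\intercal(b+\Delta)$ (from complementary slackness) gives $(y^*)^\intercal A\tx\ge(y^*)^\intercal b$. Writing $c^\intercal \tx=(c^{y^*})^\intercal \tx-(y^*)^\intercal A\tx$ and substituting both identities yields
\[
c^\intercal \tx\ \le\ c^\intercal x^*+(y^*)^\intercal(b+\Delta)-(y^*)^\intercal b\ =\ c^\intercal x^*+(y^*)^\intercal\Delta.
\]

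To finish, I would prove the additive analogue of Lemma~\ref{subgrad}. Since $y^*\ge 0$ is feasible for the Lagrangian dual of \abslp{1}, one has $\absopt\ge h_1(y^*)$ by weak duality. The inner minimizations defining $h_1$ and $h_\Delta$ are identical (both minimize $(c^y)^\intercal x$ over $\Pc$), so the two functions differ only in the constant term: $h_1(y)=h_\Delta(y)+y^\intercal\Delta$. Hence $\absopt\ge h_1(y^*)=h_\Delta(y^*)+(y^*)^\intercal\Delta=c^\intercal x^*+(y^*)^\intercal\Delta$, which combined with the previous display gives $c^\intercal \tx\le\absopt$. I do not foresee a serious obstacle beyond carefully setting up the Lagrangian dual for \lpdelta; the key qualitative point is that the \emph{lower} half of the two-sided additive guarantee is exactly what cancels the error term $(y^*)^\intercal\Delta$ -- without it one would only recover $c^\intercal \tx\le\absopt+(y^*)^\intercal\Delta$, with no a priori control on $(y^*)^\intercal\Delta$.
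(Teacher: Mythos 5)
Your proposal is correct and follows essentially the same argument as the paper's proof: solve the additively inflated LP $\lpdelta$, round with $\rndalg$, use strong Lagrangian duality (Lemma~\ref{lpduality_additive}) to conclude $x^*$ minimizes $(c^{y^*})^\intercal x$ over $\Pc$ and hence by \prop{1} that $(c^{y^*})^\intercal \tx=(c^{y^*})^\intercal x^*$, use the lower half of the two-sided guarantee together with complementary slackness to get $(y^*)^\intercal A\tx\ge(y^*)^\intercal b$, and finish with the additive subgradient bound $(y^*)^\intercal\Delta\le\absopt(\vec 0)-\absopt(\Delta)$ (Lemma~\ref{subgrad_additive}). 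The only differences are cosmetic: you fold the two supporting lemmas inline, and your $h_\Delta$, $h_1$ are the paper's $\varphi_\Delta$, $\varphi_{\vec 0}$.
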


The above result is obtained via essentially the same arguments as those in
Theorems~\ref{genrednthm} and~\ref{twosidethm}. 
For a vector $\Delta \in \mathbb{R}_+^m$, let \lpdelta denote the LP
$\min\bigl\{c^\intercal x: x\in\Pc, \ Ax\leq b + \Delta \bigr\}$. Let $\vec{0}$ denote the
all-zeros vector, and note that \lpzero is the LP-relaxation of \eqref{absp}. Let
$\absopt(\Delta)$ denote the optimum value of \lpdelta, and let $\absopt :=
\absopt(\vec{0})$.  
The Lagrangian dual of \lpdelta obtained by dualizing the side-constraints 
$Ax\leq b +\Delta$ is 
\begin{equation*}
\max_{y \in \mathbb{R}_+^m}\Bigl(\varphi_\Delta(y)\ :=\ \min_{x\in\Pc} \Phi_{\Delta,y}(x)\Bigr), 
\tag{LD$_\Delta$}
\end{equation*}
where $\Phi_{\Delta, y}(x) := 
(c^{y})^\intercal x - y^\intercal (b + \Delta)$.
(Recall that $c^y := c + A^\intercal y$.) Let $x^*$ be an optimal solution to $\lpdelta$
and $y^*=\argmax_{y\in\R_+^m}\varphi_\Delta(y)$. We have the following variants of Lemmas
\ref{lpduality} and \ref{subgrad}. 

\begin{lemma} \label{lpduality_additive}
We have $\varphi_{\Delta}(y^*) = \Phi_{\Delta, y^*}(x^*)=\absopt(\Delta)$.
\end{lemma}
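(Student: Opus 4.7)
The plan is to mirror the proof of Lemma~\ref{lpduality}, replacing the inflated bound $\ld b$ by $b + \Delta$ throughout. The argument splits into three steps: (i) identify the Lagrangian dual $\max_{y \in \R_+^m} \varphi_\Delta(y)$ with the LP dual of \lpdelta via standard LP duality; (ii) deduce $\varphi_\Delta(y^*) = \absopt(\Delta)$ from strong duality; (iii) use complementary slackness between $x^*$ and an appropriate optimal dual solution to \lpdelta to conclude $\Phi_{\Delta, y^*}(x^*) = \absopt(\Delta)$.

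For step (i), fix any linear description $\Pc = \{x \ge 0 : Cx \le d\}$ of the polytope (which exists since $\Pc$ is polyhedral). Then for every $y \in \R_+^m$,
\[
\varphi_\Delta(y) + y^\intercal(b+\Delta) \ =\ \min_{x \in \Pc}\,(c^y)^\intercal x,
\]
whose LP dual is $\max\{-d^\intercal \mu : -C^\intercal \mu \le c + A^\intercal y,\ \mu \ge 0\}$. Maximizing $\varphi_\Delta(y)$ over $y \in \R_+^m$ and substituting this dual yields
\[
\max\{-d^\intercal \mu - (b+\Delta)^\intercal y : -C^\intercal \mu - A^\intercal y \le c,\ \mu \ge 0,\ y \ge 0\},
\]
which is exactly the LP dual of \lpdelta. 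Strong duality then gives step (ii): $\varphi_\Delta(y^*) = \absopt(\Delta)$. Moreover, from this identification, $y^*$ is an optimal solution to $(\text{LD}_\Delta)$ iff there exists $\mu^*$ such that $(\mu^*, y^*)$ is an optimal dual solution to \lpdelta.

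For step (iii), pick such a $\mu^*$ and apply complementary slackness to the primal--dual optimal pair $(x^*,(\mu^*, y^*))$: whenever $x^*_e > 0$ one has $c^{y^*}_e = -(C^\intercal \mu^*)_e$; whenever $\mu^*_i > 0$ one has $(Cx^*)_i = d_i$; and whenever $y^*_j > 0$ one has $(Ax^*)_j = (b+\Delta)_j$. Substituting the first identity into $\sum_e c^{y^*}_e x^*_e$ and then invoking the other two collapses
\[
\Phi_{\Delta, y^*}(x^*) \ =\ \sum_e c^{y^*}_e x^*_e \ -\ (y^*)^\intercal(b+\Delta)
\]
to the dual objective value $-d^\intercal \mu^* - (b+\Delta)^\intercal y^*$, which equals $\absopt(\Delta)$ by strong duality.

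The only mildly delicate point, compared to the proof of Lemma~\ref{lpduality}, is that we work with an abstract polytope $\Pc$ rather than $\pst(G)$ with its concrete description; this is handled by invoking a generic inequality description $\Pc = \{x \ge 0 : Cx \le d\}$, after which strong LP duality and complementary slackness apply verbatim. No genuine new obstacle arises, and the proof transfers directly from that of Lemma~\ref{lpduality}.
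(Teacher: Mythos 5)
Your proof is correct and takes essentially the same approach as the paper's, which simply says to mimic the proof of Lemma~\ref{lpduality}; you have done so carefully, with the right extra observation that one should invoke a generic inequality description of the abstract polytope $\Pc$ in place of the concrete spanning-tree constraints. (Minor note: you list three complementary-slackness conditions, but only the first two are actually used in collapsing $\Phi_{\Delta,y^*}(x^*)$ to the dual objective value.)
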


\begin{proof}
This follows by mimicking the arguments used in the proof of Lemma \ref{lpduality}.
\end{proof}

\begin{lemma} \label{subgrad_additive}
We have $(y^*)^\intercal \Delta \le \absopt(\vec{0}) - \absopt(\Delta)$.
\end{lemma}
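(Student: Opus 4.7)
The plan is to mimic verbatim the structure of the proof of Lemma~\ref{subgrad}, with $\Delta$ playing the role that $(\ld-1)b$ played there. The only conceptual content is a soft argument about how the Lagrangian objectives for two different right-hand sides compare, combined with one application of weak duality.

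First, I will invoke Lemma~\ref{lpduality_additive} to identify $\absopt(\Delta)=\varphi_\Delta(y^*)=\Phi_{\Delta,y^*}(x^*)$. Next, applying Lemma~\ref{lpduality_additive} again, but this time with the perturbation vector $\vec{0}$, I get $\absopt(\vec{0})=\max_{y\in\R_+^m}\varphi_{\vec{0}}(y)$. Since $y^*\in\R_+^m$ is a feasible dual variable for the relaxation with $\Delta=\vec{0}$, weak duality (really just: the max is at least any particular value) gives $\absopt(\vec{0})\ge \varphi_{\vec{0}}(y^*)$.

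The key observation is that the inner LPs defining $\varphi_{\vec{0}}(y^*)$ and $\varphi_{\Delta}(y^*)$ have objectives that differ by a constant independent of $x$: for every $x\in\Pc$,
\begin{equation*}
\Phi_{\vec{0},y^*}(x)-\Phi_{\Delta,y^*}(x)=(y^*)^\intercal\Delta.
\end{equation*}
Consequently, the $\argmin$s over $\Pc$ coincide, so $x^*$ (which minimizes $\Phi_{\Delta,y^*}$ over $\Pc$ by Lemma~\ref{lpduality_additive}) also minimizes $\Phi_{\vec{0},y^*}$. Thus $\varphi_{\vec{0}}(y^*)=\Phi_{\vec{0},y^*}(x^*)=\Phi_{\Delta,y^*}(x^*)+(y^*)^\intercal\Delta=\absopt(\Delta)+(y^*)^\intercal\Delta$.

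Stringing these together yields $\absopt(\vec{0})\ge \absopt(\Delta)+(y^*)^\intercal\Delta$, which rearranges to the desired bound. There is no real obstacle here; the only thing to be careful about is the sign in the shift $\Phi_{\vec{0},y^*}-\Phi_{\Delta,y^*}=(y^*)^\intercal\Delta$ (the term $-y^\intercal(b+\Delta)$ becomes less negative by exactly $y^\intercal\Delta$ when $\Delta$ is zeroed out), which is why the final inequality has $(y^*)^\intercal\Delta$ on the smaller side.
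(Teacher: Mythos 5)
Your proof is correct and follows essentially the same approach as the paper's: both invoke Lemma~\ref{lpduality_additive} to identify $\absopt(\Delta)=\Phi_{\Delta,y^*}(x^*)$, note that $\Phi_{\vec{0},y^*}$ and $\Phi_{\Delta,y^*}$ differ by the constant $(y^*)^\intercal\Delta$ so $x^*$ minimizes both, and use $\absopt(\vec{0})\ge\varphi_{\vec{0}}(y^*)$ to conclude. The observation that $\Delta$ plays the role of $(\ld-1)b$ from Lemma~\ref{subgrad} is exactly the analogy the paper is using when it says ``we mimic the proof of Lemma~\ref{subgrad}.''
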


\begin{proof}
We mimic the proof of Lemma \ref{subgrad}. By Lemma~\ref{lpduality_additive}, we have that 
\[
\absopt(\Delta) = \varphi_{\Delta}(y^*) = \Phi_{\Delta, y^*}(x^*)
\]
and
$\absopt(\vec{0}) = \max_{y \in \mathbb{R}_+^{\mathcal{S}}} \varphi_{\vec{0}}(y) \ge \varphi_{\vec{0}}(y^*)$.
Note that $\Phi_{\vec{0},y^*}(x) - \Phi_{\Delta,y^*}(x) = (y^*)^\intercal \Delta$, which
is independent of $x$. So since $x^*$ is an optimal solution to 
$\min_{x \in \Pc}\Phi_{\Delta, y^*}(x)$, it is also an 
optimal solution to $\min_{x \in \Pc} \Phi_{\vec{0}, y^*}(x)$. It follows that 
\[
\absopt(\vec{0}) \ge \varphi_{\vec{0}}(y^*) = \Phi_{\vec{0}, y^*}(x^*).
\]
Hence, 
$\absopt(\vec{0})-\absopt(\Delta)\ge\Phi_{\vec{0}, y^*}(x^*)-\Phi_{\Delta,y^*}(x^*)=(y^*)^\intercal\Delta$.
\end{proof}

\begin{proofof}{Theorem~\ref{twosidethm_additive}}
The algorithm simply computes an optimal solution $x^*$ to \lpdelta, and rounds it to an 
extreme point $\tx$ of $\mathcal{P}$ using algorithm $\rndalg$.

It is clear that $A\tx \le Ax^* + \Delta \le (b + \Delta) + \Delta = b + 2\Delta$. Next we
argue that $c^\intercal \tx \le \absopt$.  We have $c^\intercal \tx=(c^{y^*})^\intercal
\tx-(y^*)^\intercal A\tx$, and  
\begin{align*}
(y^*)^\intercal A\tx & = \sum_{i:A_i^\intercal x^*= b_i + \Delta_i}y^*_i(A_i^\intercal \tx)\geq
\sum_{i:A_i^\intercal x^*=b_i + \Delta_i}y^*_i(A_i^\intercal x^* - \Delta_i) \\
& =\sum_{i:A_i^\intercal x^*= b_i + \Delta_i}y^*_ib_i=(y^*)^\intercal b.
\end{align*}
By Lemma \ref{lpduality_additive}, $x^*$ is an optimal solution
to $\min_{x\in\Pc}\varPsi_{\Delta, y^*}(x)$. 
So all points on the minimal face of $\Pc$ containing $x^*$ are
optimal solutions to this LP. In particular, since $\tx$ belongs to this minimal face (by
property \prop{1}), $\tx$  is an optimal solution to this LP.
This observation, along with the inequality above, yields $c^\intercal \tx\leq (c^{y^*})^\intercal x^*-(y^*)^\intercal b = \absopt(\Delta) + (y^*)^\intercal \Delta$. Finally, using Lemma \ref{subgrad_additive} yields $c^\intercal \tx \le \absopt(\vec{0})$ as required.
\end{proofof}

\subsection{Application to $k$-budgeted matroid basis}
\label{sec:kbmb}

Here, we seek to find a basis $S$ of a matroid $M = (U, \mathcal{I})$ satisfying $k$
budget constraints $\{d_i(S) \le B_i\}_{1 \le i \le k}$,  
where $d_i(S) := \sum_{e \in S} d_i(e)$. 
Note that this can be cast a special case of \eqref{absp}, where $\mathcal{P} = \mathcal{P}(M)$ is the
basis polytope of $M$, the objective function encodes a chosen budget constraint (say
the $k$-th budget constraint), and $Ax \le b$ encodes the remaining budget constraints. 
We show that our techniques, combined with a recent randomized algorithm
of~\cite{BansalN16}, yields a (randomized) algorithm that, for any $\epsilon > 0$,
returns in $n^{O(k^{1.5}/\epsilon)}$ time a basis that (exactly) satisfies the chosen 
budget constraint, and violates the other budget constraints by (at most) a 
$(1 +\epsilon)$-factor, where $n := |U|$ is the size of the ground-set of $M$.  
This {\em matches} the current-best approximation guarantee of~\cite{Grandoni2014} (who
give a deterministic algorithm) {\em and} the current-best running time
of~\cite{BansalN16}. 

\begin{theorem}[\cite{BansalN16}] \label{bnthm}
For some constant $\nu > 0$, there exists a randomized \fpra, \bnalg, for unweighted $\QPM$  that rounds any 
$x\in \PM$ to an extreme point $\hx$ of $\PM$ such that
$Ax- \nu \sqrt{k} \Delta \le A\hx \le Ax + \nu \sqrt{k} \Delta$,
where $\Delta = (\max_{1\le j \le n} a_{ij})_{1 \le i \le k - 1} = (\max_{e \in U} d_{i}(e))_{1 \le i \le k - 1}$.
\end{theorem}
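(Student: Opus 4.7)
The plan is to construct \bnalg as a randomized swap-rounding procedure in the style of Chekuri, Vondr\'ak and Zenklusen, coupled with a sharper concentration analysis. Starting from $x\in\PM$, I would first express $x$ as a convex combination of bases of $M$ whose supports all lie inside $\supp(x)$ (via Carath\'eodory applied to the restriction of $\PM$ to $\supp(x)$). Then I would iteratively \emph{merge} pairs of bases in the decomposition: for two bases $B_1,B_2$ carrying weights $\alpha_1,\alpha_2$, replace them by a single random basis obtained by performing a matroid exchange between $B_1$ and $B_2$, with the swap probabilities chosen so that the expected fractional position is preserved. After $\poly(n)$ merges the decomposition collapses to a single random basis $\hx$ with $\supp(\hx)\sse\supp(x)$ and $\E[\hx]=x$. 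Since the minimal face of $\PM$ containing $x$ is exactly the convex hull of bases of the restricted matroid on $\supp(x)$ (and, more precisely, the face cut out by the rank constraints tight at $x$), $\hx$ automatically lies on this minimal face, giving property (P1).

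To obtain the deviation bound, for each constraint $i\in\{1,\dots,k-1\}$ I would track the sequence $Y_t:=A_i^\intercal x^{(t)}$, where $x^{(t)}$ is the intermediate fractional point after $t$ merges. By construction $\{Y_t\}$ is a martingale, and each merge alters the current point in at most two coordinates, each by a bounded amount, yielding $|Y_t-Y_{t-1}|=O(\Delta_i)$ with $\Delta_i:=\max_{e\in U}d_i(e)$. The key observation of Chekuri-Vondr\'ak-Zenklusen is that the coordinates of $\hx$ are \emph{negatively correlated}, so one can apply Chernoff-type tail bounds to conclude $\Pr\bigl[|A_i^\intercal\hx - A_i^\intercal x|>\nu\sqrt{k}\,\Delta_i\bigr]\le 1/(2k)$ for a suitable constant $\nu$. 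A union bound over the $k-1$ constraints then yields the joint additive deviation bound with probability at least $1/2$, and this can be boosted to high probability by independent repetitions combined with a straightforward verifier.

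The main obstacle I foresee is obtaining the $\sqrt{k}$ factor rather than a $\sqrt{\log k}$ (what per-constraint Chernoff plus union bound naturally gives in the best case) or a $\sqrt{n}$ (what plain Azuma on the martingale gives). The $\sqrt{k}$ shape is reminiscent of algorithmic discrepancy bounds (Banaszczyk, Bansal), which handle the $k$ constraints jointly rather than by a union bound; incorporating such a vector-balancing step inside the swap-rounding framework, while still respecting the matroid exchange structure, is likely the delicate part. A second subtlety will be making property (P1) hold \emph{exactly}, not merely up to containment in some larger face of $\PM$ --- this requires the initial convex decomposition to span the minimal face at $x$, which I would enforce via a preprocessing step that restricts $M$ to the submatroid singled out by the tight rank constraints at $x$.
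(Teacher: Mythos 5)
This theorem is not proved in the paper at all --- it is cited verbatim from Bansal--Nagarajan~\cite{BansalN16}, and the present paper uses it purely as a black box inside Lemma~\ref{bmbthm_lemma}. So there is no ``paper's proof'' to compare against; what I can do is assess whether your sketch could stand in for the cited result.

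It cannot, and you have already put your finger on exactly why. A swap-rounding (Chekuri--Vondr\'ak--Zenklusen) scheme with negative correlation and Chernoff tails controls $|A_i^\intercal\hx - A_i^\intercal x|$ by roughly the standard deviation of $A_i^\intercal\hx$, which scales like $\Delta_i\sqrt{\sum_j x_j(1-x_j)}$ --- potentially $\Delta_i\sqrt{n}$ --- not $\Delta_i\sqrt{k}$. Taking a union bound over the $k-1$ rows then only pays an extra $\sqrt{\log k}$ inside the deviation, so the shape of the bound you get is $\Delta_i\sqrt{n\log k}$, which has the wrong dependence entirely: it grows with $n$ and is completely insensitive to the number $k$ of side constraints. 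The $\sqrt{k}\,\Delta_i$ bound in Theorem~\ref{bnthm} is a genuine \emph{discrepancy} bound, with $n$ gone and $\sqrt{k}$ in its place, and no amount of repetition-and-verification boosting can rescue a concentration-based argument whose per-row guarantee is already too weak with constant probability. Bansal and Nagarajan in fact do not use swap rounding; their algorithm interleaves an iterated-rounding/linear-algebraic step (which kills coordinates and keeps the iterate on the minimal face of $\PM$, giving (P1) for free) with a constructive partial-coloring step in the Lovett--Meka / Banaszczyk style that walks within the current face along update directions of small $A$-norm, and it is this discrepancy-walk component --- not martingale concentration --- that produces the $\sqrt{k}$. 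So the ``vector-balancing step'' you flag as the delicate part is not a refinement to bolt onto swap rounding; it is the entire engine, and swap rounding has to be replaced by it. The face-preservation issue you raise at the end is, by contrast, the easy part: restricting to the submatroid determined by the tight rank constraints at $x$, as you propose, is indeed how one gets (P1).
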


\begin{lemma}\label{bmbthm_lemma}
There exists a polytime randomized algorithm that finds a basis $S$ of $M$ such that $d_k(S)\leq B_k$, and $d_i(S)\leq B_i+ 2 \nu \sqrt{k} \max_{e \in U} d_{i}(e)$ for all
$1\leq i\leq k - 1$, or determines that the instance is infeasible.
\end{lemma}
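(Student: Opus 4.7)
The plan is to cast the $k$-budgeted matroid basis problem as an instance of \eqref{absp} and invoke Theorem~\ref{twosidethm_additive} using the randomized \fpra \bnalg provided by Theorem~\ref{bnthm}. Specifically, I take $\Pc=\PM$, whose extreme points are precisely the characteristic vectors of bases of $M$; let $c\in\R_+^n$ be given by $c_e=d_k(e)$; and let $A\in\R_+^{(k-1)\times n}$ be the matrix with $A_{ie}=d_i(e)$, with packing right-hand side $b=(B_1,\ldots,B_{k-1})$. Under this encoding, a basis $S$ satisfies all $k$ budget constraints iff its characteristic vector lies in $\PM$, obeys $Ax\le b$, and has $c$-value at most $B_k$.

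First I would solve the LP-relaxation \lpzero, namely $\min\{c^\intercal x:x\in\PM,\,Ax\le b\}$. If this LP is infeasible, or its optimum $\absopt$ exceeds $B_k$, then no fractional point, and hence no basis, satisfies all $k$ budget constraints, so the algorithm correctly reports ``infeasible''. Otherwise $\absopt\le B_k$, and I apply Theorem~\ref{twosidethm_additive} to $\Pc,c,A,b$ with additive-error vector $\Delta'=\nu\sqrt{k}\bigl(\max_{e\in U}d_i(e)\bigr)_{1\le i\le k-1}$. Crucially, this $\Delta'$ depends only on $A$, $c$, and $k$, and not on $b$, so Theorem~\ref{bnthm} ensures that \bnalg provides exactly the two-sided additive rounding hypothesis required by Theorem~\ref{twosidethm_additive} for this $\Delta'$. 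The theorem then produces, in polynomial time, an extreme point $\tx$ of $\PM$ --- i.e., the characteristic vector of a basis $S$ --- satisfying $c^\intercal\tx\le\absopt\le B_k$ and $A\tx\le b+2\Delta'$. Unpacking these inequalities yields $d_k(S)\le B_k$ and $d_i(S)\le B_i+2\nu\sqrt{k}\max_{e\in U}d_i(e)$ for every $1\le i\le k-1$, as required.

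The main subtlety, and the one potential obstacle, is that \bnalg is randomized whereas Theorem~\ref{twosidethm_additive} is stated for a (nominally deterministic) \fpra. However, the proof of Theorem~\ref{twosidethm_additive} nowhere appeals to determinism: it uses only property~\prop{1} and the two-sided additive bound on $A\hx$ applied to a single execution of the rounding. Hence any single run of \bnalg whose output meets the guarantee of Theorem~\ref{bnthm} suffices, and if that guarantee holds only with high probability, standard amplification (running \bnalg polynomially many times and verifying property~\prop{1} together with the additive bound $|A\hx-Ax^*|\le\Delta'$ on each output) produces a valid rounding in polynomial time with high probability. Composing this with the LP step gives the randomized polytime algorithm claimed in the lemma.
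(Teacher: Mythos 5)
Your proof takes essentially the same route as the paper: cast $k$-budgeted matroid basis as an instance of $\QPM$ with $c=d_k$, $A_i=d_i$ for $i<k$, $b=(B_1,\ldots,B_{k-1})$, and then invoke Theorem~\ref{twosidethm_additive} with $\rndalg=\bnalg$. The paper's own proof is a one-liner that says exactly this, so you have the right mechanism.

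One point where you are actually more careful than the paper: you explicitly check not only LP-infeasibility but also whether $\absopt>B_k$, declaring infeasibility in the latter case too. This check is in fact needed, because Theorem~\ref{twosidethm_additive} only guarantees $c^\intercal\tx\le\absopt$, and if $\absopt>B_k$ then the returned basis could violate the $k$-th budget; the observation that $\absopt>B_k$ certifies infeasibility of the original instance (since a feasible basis would yield a fractional point with $c$-value at most $B_k$) closes this off cleanly. The paper glosses over this (and in the enclosing Theorem~\ref{bmbthm} the verification step 1(d) would catch it anyway), but your version is self-contained and correct. Your discussion of the randomized vs.\ deterministic \fpra issue is also a reasonable addition; the paper does not dwell on it.
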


\begin{proof}
As explained above, we cast the problem as a special case of \eqref{absp} by using the
$k$-th budget constraint as the objective function, and the remaining budget constraints
as packing side-constraints. If the LP-relaxation of \eqref{absp} is infeasible, then the
budgeted-matroid-basis instance is infeasible. Otherwise, the above guarantee follows by applying
Theorem~\ref{twosidethm_additive} with the algorithm \rndalg=\bnalg. 
\end{proof}

 Using ideas from \cite{BansalN16}, we combine the algorithm from Lemma \ref{bmbthm_lemma} with a partial enumeration step as follows. We say an element $e \in U$ is \emph{heavy} if the inequality $d_i(e) > \frac{\epsilon}{2\nu \sqrt{k}} B_i$ holds for at least one index $i \in \{1, \dots, k\}$. Let $H$ denote the set of all heavy elements. We state our algorithm below. Let $\epsilon > 0$ be a parameter.

\begin{enumerate}[nosep, topsep=0.5ex]
\item For every set $\widetilde{H} \subseteq H$ of size $|\widetilde{H}| \le \frac{2 \nu
  k^{1.5}}{\epsilon}$, we do the following.
\begin{enumerate}[nosep]
\item Let $M'$ be the matroid obtained from $M$ by contracting the elements of $\widetilde{H}$ and deleting the elements of $H \setminus \widetilde{H}$. \label{alg:a}
\item Compute residual budgets $B'_i := B_i - d_i(\widetilde{H})$, for $i \in \{1, \dots, k\}$. \label{alg:b}
\item Run the algorithm from Lemma \ref{bmbthm_lemma} on matroid $M'$ with budgets $\{B'_i\}_{1 \le i \le k}$.
\item If the algorithm succeeds (that is, if the LP that it attempts to solve is feasible), then let $T$ be the set of elements returned, and let $S := \widetilde{H} \cup T$. If $S$ is a basis of $M$, $d_k(S) \le B_k$, and $d_i(S) \le (1 + \epsilon) B_i$ for all $1 \le i \le k - 1$, then return $S$. \label{alg:d}
\end{enumerate}
\item If step 1 does not return any set $S$, then return that the instance is infeasible.
\end{enumerate}

\begin{theorem} 
\label{bmbthm}
The algorithm above, run with parameter $\epsilon > 0$, finds in
$n^{O(k^{1.5}/\epsilon)}$ time a basis $S$ of $M$ such that $d_k(S) \le B_k$ and $d_i(S)
\le (1 + \epsilon) B_i$ for all $1 \le i \le k - 1$, or determines that the instance is
infeasible.  
\end{theorem}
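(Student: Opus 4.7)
\begin{proofsketchof}{Theorem~\ref{bmbthm}}
My plan is two-fold: bound the number of iterations to establish the running time, and argue that whenever the instance is feasible some iteration returns an acceptable $S$. Conversely, any iteration that does return $S$ automatically produces a valid answer, since step~\ref{alg:d} explicitly verifies all required conditions. For the running time, I would note that the outer loop enumerates at most
$\sum_{j=0}^{2\nu k^{1.5}/\epsilon}\binom{|H|}{j}=n^{O(k^{1.5}/\epsilon)}$
subsets, and each iteration performs polynomial-time work (constructing $M'$, computing the residual budgets $B'_i$, and invoking the polytime algorithm of Lemma~\ref{bmbthm_lemma}).

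For correctness, I would fix any feasible basis $S^*$ (so $d_i(S^*)\le B_i$ for all $i$) and set $\widetilde{H}^*:=S^*\cap H$. The first key step is a simple averaging bound on $|\widetilde{H}^*|$: for each fixed $i$, at most $\frac{2\nu\sqrt{k}}{\epsilon}$ elements of $S^*$ can satisfy $d_i(e)>\frac{\epsilon}{2\nu\sqrt{k}}B_i$, since their total $d_i$-contribution is capped at $B_i$; summing over $i=1,\ldots,k$ yields $|\widetilde{H}^*|\le\frac{2\nu k^{1.5}}{\epsilon}$, so the enumeration encounters $\widetilde{H}^*$ in some iteration. For this iteration, I would use $T^*:=S^*\setminus\widetilde{H}^*$ as a witness that the residual instance on $M'=M/\widetilde{H}^*\setminus(H\setminus\widetilde{H}^*)$ with budgets $\{B'_i\}_i$ is feasible; in particular $\widetilde{H}^*$ is independent in $M$ and the LP-relaxation used by Lemma~\ref{bmbthm_lemma} is feasible.

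To close the argument, I would apply Lemma~\ref{bmbthm_lemma} to obtain a basis $T$ of $M'$ with $d_k(T)\le B'_k$ and $d_i(T)\le B'_i+2\nu\sqrt{k}\cdot\max_{e\in U'}d_i(e)$ for $i<k$, where $U'$ is the ground set of $M'$. Since every element of $U'$ is light by construction, $\max_{e\in U'}d_i(e)\le\frac{\epsilon}{2\nu\sqrt{k}}B_i$, so $d_i(T)\le B'_i+\epsilon B_i$. Setting $S:=\widetilde{H}^*\cup T$ then produces a basis of $M$ with $d_k(S)=d_k(\widetilde{H}^*)+d_k(T)\le B_k$ and $d_i(S)\le d_i(\widetilde{H}^*)+B'_i+\epsilon B_i=(1+\epsilon)B_i$ for $i<k$, so the check in step~\ref{alg:d} passes and $S$ is returned. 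The main obstacle, and the reason the partial enumeration is unavoidable, is precisely the calibration of the heavy-element threshold: the value $\frac{\epsilon}{2\nu\sqrt{k}}B_i$ is chosen so that the additive slack $2\nu\sqrt{k}\max_e d_i(e)$ inherited from Lemma~\ref{bmbthm_lemma} converts to a clean multiplicative $(1+\epsilon)$-violation on the residual matroid, while simultaneously keeping the enumeration cost at $n^{O(k^{1.5}/\epsilon)}$.
\end{proofsketchof}
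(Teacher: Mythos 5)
Your proof sketch follows the paper's argument essentially verbatim: the same running-time count, the same averaging bound $|S^*\cap H|\le\frac{2\nu k^{1.5}}{\epsilon}$ obtained by charging each heavy element to the budget index it is heavy for, the same use of $S^*\setminus\widetilde{H}^*$ as a feasibility witness for the residual instance, and the same arithmetic combining Lemma~\ref{bmbthm_lemma}'s additive guarantee with the lightness threshold to land on $(1+\epsilon)B_i$. This is correct and matches the paper's proof.
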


\begin{proof}
Note that the number of iterations is at most $n^{\frac{2 \nu k^{1.5}}{\epsilon}} =
n^{O(k^{1.5}/\epsilon)}$. Since steps 1(a)--1(d) run in $\poly(n)$ time,
the overall running time is $n^{O(k^{1.5}/\epsilon)}$ as claimed. 

If the instance is infeasible, then any outcome of the algorithm (infeasible, or a basis
$S$) is correct. (Note that due to the verification done at the end of step 1(d), any set
$S$ returned must have the required properties.) 
So assume that the instance is feasible, and let $S^*$ be a basis of $M$ that exactly
satisfies all the budget constraints. We argue that in this case the algorithm does indeed
return a basis with the desired properties. Let $H^* := S^* \cap H$ be the 
set of heavy elements that $S^*$ contains. Note that since a heavy element uses up at least
one budget to an extent greater than $\frac{\epsilon}{2\nu \sqrt{k}}$, and since $S^*$
satisfies all the $k$ budget constraints, we must have $|H^*| \le
\frac{k}{\frac{\epsilon}{2\nu\sqrt{k}}} = \frac{2\nu k^{1.5}}{\epsilon}$. Note that at the
iteration corresponding to $\widetilde{H} = H^*$ (if the algorithm reaches it), the set
$S^* \setminus H^*$ is feasible for the residual problem (with a matroid $M'$ and residual
budgets $\{B'_i\}$ defined in steps 1(a) and 1(b)). Further, note that this
set also certifies that the resulting set $S$ satisfies $d_k(S) = d_k(H^*) + d_k(T) \le
d_k(H^*) + d_k(S^* \setminus H^*) = d_k(S^*) \le B_k$. Finally, for every $i \in \{1,
\dots, k - 1\}$, we have 
\begin{align*}
d_i(S) & = d_i(H^*) + d_i(T) \le d_i(H^*) + B'_i + 2 \nu \sqrt{k} \max_{e \in U \setminus H} d_i(e) \\
& \le B_i + 2\nu \sqrt{k} \frac{\epsilon}{2\nu\sqrt{k}} B_i = (1 + \epsilon) B_i,
\end{align*}
and so the set $S$ will pass the verification done at step 1(d) and will be returned by
the algorithm. 
\end{proof}

\bibliographystyle{plain}

\begin{thebibliography}{10}

\bibitem{AsadpourGMOS10}
A.~Asadpour, M.~Goemans, A.~Madry, S.~Oveis Gharan, and A.~Saberi.
\newblock An $O(\log n/\log\log n)$-approximation algorithm for the asymmetric traveling
salesman problem. 
\newblock In {\em Proceedings of the 20th SODA}, pages 379--389, 2010.

\bibitem{BansalKKNP13}
N.~Bansal, R.~Khandekar, J.~K{\"o}nemann, V.~Nagarajan, and B.~Peis.
\newblock On generalizations of network design problems with degree bounds.
\newblock {\em Mathematical Programming}, 141(1-2):479--506, 2013.

\bibitem{BansalKN09}
N.~Bansal, R.~Khandekar, and V.~Nagarajan.
\newblock Additive guarantees for degree-bounded directed network design.
\newblock {\em SICOMP}, 39(4):1413--1431, 2009.

\bibitem{BansalN16}
N.~Bansal and V.~Nagarajan.
\newblock Approximation-friendly discrepancy rounding.
\newblock To appear in {\em Proc., IPCO 2016}. Also appears as 
arXiv:1512.02254, 2015.

\bibitem{ChaudhuriRRT09}
K.~Chaudhuri, S.~Rao, S.~Riesenfeld, and K.~Talwar. 
\newblock What would Edmonds do? Augmenting paths and witnesses for degree-bounded
MSTs. 
\newblock {\em Algorithmica} 55:157--189, 2009.

\bibitem{ChekuriVZ10}
C.~Chekuri, J.~Vondrak, and R.~Zenklusen.
\newblock Dependent randomized rounding via exchange properties of combinatorial
structures. 
\newblock In {\em 51st FOCS}, 2010.

\bibitem{FurerR94}
M.~F\"urer and B.~Raghavachari.
\newblock Approximating the minimum-degree Steiner Tree to within one of optimal. 
\newblock {\em Journal of Algorithms} 17(3):409--423, 1994.

\bibitem{Goemans06}
M.~Goemans.
\newblock Minimum bounded degree spanning trees. 
\newblock In {\em 47th FOCS}, 2006.

\bibitem{Grandoni2014}
F.~Grandoni, R.~Ravi, M.~Singh, and R.~Zenklusen.
\newblock New approaches to multi-objective optimization.
\newblock {\em Mathematical Programming}, 
146(1-2):525--554, 2014.

\bibitem{Jain01}
K.~Jain. 
\newblock A factor 2 approximation algorithm for the generalized Steiner network problem. 
\newblock {\em Combinatorica}, 21:39--60, 2001.

\bibitem{KonemannR02}
J.~K\"onemann and R.~Ravi.
\newblock A matter of degree: improved approximation algorithms for degree-bounded minimum
spanning trees. 
\newblock {\em SICOMP}, 31:1783--1793, 2002.
\newblock {\em SIAM Journal on Computing}, 31:1783--1793, 2002.

\bibitem{KonemannR03}
J.~K\"onemann and R.~Ravi.
\newblock Primal-dual meets local search: approximating MST's with nonuniform degree
bounds. 
\newblock In {\em Proceedings of the 35th STOC}, 389--395 2003.

\bibitem{LinharesS16}
A.~Linhares and C.~Swamy.
\newblock Approximating min-cost chain-constrained spanning trees: a reduction from weighted
  to unweighted problems.
\newblock To appear in {\em Proceedings of the 18th IPCO}, 2016.

\bibitem{olver}
N.~Olver and R.~Zenklusen.
\newblock Chain-constrained spanning trees.
\newblock In {\em Proceedings of the 16th IPCO}, pages 324--335, 2013.

\bibitem{RaviMRRH01}
R.~Ravi, M.~Marathe, S.~Ravi, D.~Rosenkrantz, and H.~Hunt III. 
\newblock Approximation algorithms for degree-constrained minimum-cost network-design
problems. 
\newblock {\em Algorithmica} 31(1):58--78, 2001.

\bibitem{SinghL07}
M.~Singh and L.~Lau.
\newblock Approximating minimum bounded degree spanning trees to within one of optimal.
\newblock In {\em Proceedings of the 39th STOC}, pages 661--670. 2007.

\bibitem{Zenklusen12}
R.~Zenklusen.
\newblock Matroidal degree-bounded minimum spanning trees.
\newblock In {\em Proceedings of the 23rd SODA}, pages 1512--1521, 2012.

\end{thebibliography}

\appendix

\section{Proof of Lemma~\ref{rfree}}
This follows from essentially the same potential-function argument as used in~\cite{olver}
to obtain a rainbow-free solution.
Sort the edges of $\supp(x)$ in increasing order of $|\Sc_e|$ breaking ties arbitrarily. 
Let $e_1, e_2, \ldots, e_{k}$ denote this ordering. Let $w \in \mathbb{R}^E$ be any weight
function such that $w_{e_1} < w_{e_2} < \dots < w_{e_{k}}$ (e.g., $w_{e_i} = i$ for all $i$). 
Let $x'$ be an optimal solution to the following LP. (Note that the LP has variables $\{z_e\}_{e \in E}$, and that the $\{x_e\}_{e \in E}$ values are fixed.)
\begin{alignat*}{3}
\min & \quad & \sum_e w_e & z_e \tag{P'} \label{potlp} \\
\text{s.t.} & \quad & z\in\pst&(G), \quad z_e =0 \qquad  && \forall e\notin\supp(x) \\
&& z\bigl(\delta(S)\bigr) & \le x\bigl(\dt(S)\bigr) \qquad && \forall S \in \mathcal{S} \\
&& z\bigl(E(L)\bigr) & = |L|-1 \qquad && \forall L\in\Lc.
\end{alignat*}
Properties (i) and (iii) hold by construction. Since we force the spanning-tree
constraints corresponding to sets in $\mathcal{L}$ to be tight, we can start with $\Lc$ and
extend it to obtain a laminar decomposition $\Lc'$ of $x'$ that refines $\Lc$, so (ii) holds.  

It remains to show that $(x',\Lc')$ is a rainbow-free decomposition. Consider any set $L\in\Lc'$ and any
two edges $e,f\in \supp(x')\cap E^{\mathcal{L}'}_L$, and suppose that $e, f$ form a rainbow. Let
$w_e<w_f$, so we must have $\Sc_e\sse\Sc_f$. 
Now perturb $x'$ by adding $\e$ to $x'_e$ (the argument below will show that $x'_e<1$) and
subtracting $\e$ from $x'_f$, where $\e>0$ is chosen to be suitably small; 
let $x''$ be this perturbed vector. Clearly,
$w^Tx''<w^Tx'$, so if we show that $x''$ is feasible to \eqref{potlp}, then we obtain a
contradiction. Clearly, $\supp(x'')\sse\supp(x)$. Since $\Sc_e\sse\Sc_f$ it follows that   
$x''\bigl(\dt(S)\bigr)\leq x\bigl(\dt(S)\bigr)$ for all $S\in\Sc$. Also,
$x''\bigl(E(L)\bigr)=x'\bigl(E(L)\bigr)=|L|-1$ for all $L\in\Lc$. 

Finally, we show that $x''\in\pst(G)$ for a sufficiently small $\e>0$. (Hence, $x'_e<x''_e\leq 1$.) 
For $A\sse V$ such that
$x'\bigl(E(A)\bigr)<|A|-1$, we obtain $x''\bigl(E(A)\bigr)\leq|A|-1$ by taking $\e>0$
suitably small; for $A$ with $x'\bigl(E(A)\bigr)=|A|-1$, we obtain
$x''\bigl(E(A)\bigr)=|A|-1$ since the spanning-tree constraints for all $L\in\Lc'$ are
tight at ($x'$ and) $x''$ and these span all other tight spanning-tree constraints.
\qed

\end{document}